\documentclass[sigconf]{NewEDBTStyle-2025/acmart}

\AtBeginDocument{%
  }

\geometry{twoside=true, head=13pt,
     a4paper, 
     includeheadfoot, columnsep=2pc,
     top=57pt, bottom=73pt, inner=54pt, outer=54pt,
     marginparwidth=2pc,heightrounded
     }%

\hyphenation{block-chain}
\hyphenation{Acy-clic}

\AtBeginMaketitle{%

\setcopyright{none}
\copyrightyear{\copyright\ 2025 Copyright held by the owner/author(s).
  Published on \url{OpenProceedings.org} under ISBN 978-3-98318-102-5, series ISSN 2367-2005.
  Distribution of this paper is permitted under the terms of the
  Creative Commons license CC-by-nc-nd 4.0}
\acmYear{2025}
\acmDOI{}
\acmISBN{}

\acmConference[EDBT '26]{Extending Database Technology}{24-27 March 2026}{Tampere (Finland)}

\settopmatter{printacmref=false, printccs=false, printfolios=false}

\newsavebox{\ximagebox}
\newlength{\ximageheight}
\newsavebox{\xglyphbox}
\newlength{\xglyphheight}
\newcommand{\xbox}[1]%
  {\savebox{\ximagebox}{#1}%
  \settoheight{\ximageheight}{\usebox{\ximagebox}}%
  \savebox{\xglyphbox}{\color{white}\char32}%
  \settoheight{\xglyphheight}{\usebox{\xglyphbox}}%
  \raisebox{\ximageheight}[0pt][0pt]{\raisebox{-\xglyphheight}[0pt][0pt]{%
    \makebox[0pt][l]{\usebox{\xglyphbox}}}}%
    \usebox{\ximagebox}%
    \raisebox{0pt}[0pt][0pt]{\makebox[0pt][r]{\usebox{\xglyphbox}}}}


\newsavebox{\LogoBox}
\sbox{\LogoBox}{\includegraphics[height=1cm]{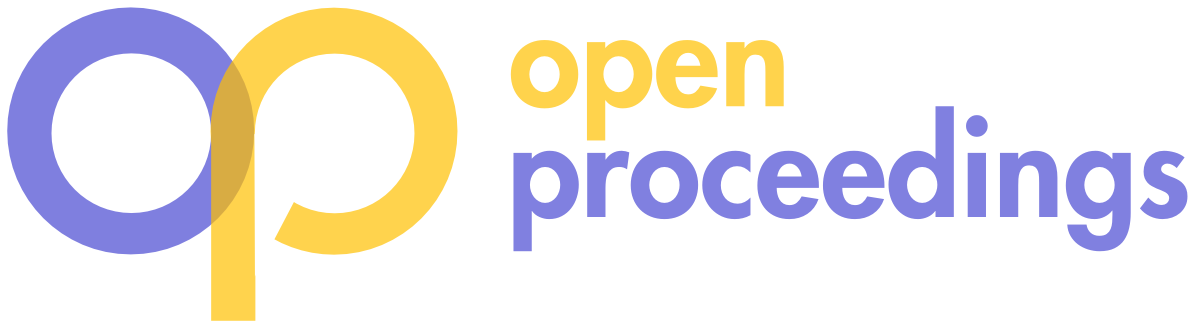}}

\fancypagestyle{firstpagestyle}{%
  \fancyhf{}%
  \fancyfoot{}%
  \fancyhead[L,C]{}
  \fancyhead[R]{\href{https://OpenProceedings.org/}{\raisebox{15pt}[0pt][0pt]{\xbox{\usebox{\LogoBox}}}}}
  }

\hypersetup{%
  pdfcopyright={CC-by-nc-nd}
}

}





\usepackage[inline]{enumitem}
\usepackage{caption}
\usepackage{subcaption}
\usepackage{balance}
\usepackage[binary-units]{siunitx}
\usepackage{graphicx}

\usepackage[nobiblatex]{xurl}

\usepackage{hyperref}
\usepackage{cleveref}
\usepackage{makecell}
\usepackage{listings}
\usepackage{xspace}
\usepackage{natbib}
\usepackage{xspace}
\usepackage[title]{appendix}



\newtheorem{theorem}{Theorem}

\newtheorem{example}{Example}

\newcommand{\para}[1]{\paragraph{\textbf{#1}}}



\newcommand{\MName}[1]{\textsc{#1}}

\newcommand{\OCC}{\MName{OCC}}

\newcommand{\Bamboo}{\MName{Bamboo}}

\newcommand{\XV}{\MName{SACC/NO-Pool}}
\newcommand{\XVV}{\MName{XVV}}
\newcommand{\XVVV}{\MName{XVVV}}

\newcommand{\XP}{Thunderbolt}
\newcommand{\TPL}{\MName{2PL-No-Wait}}

\newcommand{\sysname}{\XP\xspace}

\newcommand{\SST}{$Single-shard TX$}
\newcommand{\CST}{$Cross-shard TX$}

\usepackage[noend]{algorithmic}
\newenvironment{myprotocol}{
    \hrule
    \smallskip
    \footnotesize
    \algsetup{linenosize=\tiny}
    \begin{algorithmic}[1]

        \makeatletter
        \newcommand{\EVENT}[1]{\STATE \textbf{event} ##1 \textbf{do}\begin{ALC@g}}
                \newcommand{\ENDEVENT}{\end{ALC@g}}
        \makeatother

        \makeatletter
        \newcommand{\FUNC}[2]{\STATE \textbf{function} \textbf{##1} (##2) \begin{ALC@g}}
                \newcommand{\ENDFUNC}{\end{ALC@g}}
        \makeatother

        }{
    \end{algorithmic}
    \smallskip
    \hrule
}

\usepackage{tikz,pgfplots,pgfplotstable}
\usetikzlibrary{arrows.meta,calc,decorations.pathreplacing,patterns,shapes}
\tikzset{
    dot/.append style={circle,scale=0.35,draw=black,fill=black},
    sdot/.append style={scale=0.45,draw=black,fill=black},
    label/.append style={align=center,font=\strut\footnotesize},
    >=Stealth,
    every edge/.append style={semithick},
    thread/.append style={align=center,draw,thick,rectangle,text width=1cm,text height=2ex,text depth=.25ex,minimum height=0.75cm,font=\strut},
}

\definecolor{colA}{RGB}{230,159,0}
\definecolor{colB}{RGB}{86,180,233}
\definecolor{colC}{RGB}{0,158,115}
\definecolor{colD}{RGB}{240,228,66}
\definecolor{colE}{RGB}{0,114,178}
\definecolor{colF}{RGB}{213,94,0}
\definecolor{colG}{RGB}{204,121,167}
\definecolor{colH}{RGB}{238,130,238}
\definecolor{colI}{RGB}{64,224,208}
\definecolor{colGrey}{RGB}{211,211,211}
\definecolor{colIvory}{RGB}{255,235,215}
\definecolor{colDeepPink}{RGB}{255,20,147}
\definecolor{colLightRed}{RGB}{255,235,0}
\definecolor{colSkyBlue}{RGB}{135, 206, 235}
\definecolor{colVermillion}{RGB}{227, 66, 52}
\definecolor{colReddishPurple}{RGB}{100, 0, 120}
\definecolor{DarkOrange}{RGB}{255,140,0}
\definecolor{Peru}{RGB}{128,0,0}
\definecolor{deepblue}{rgb}{0,0,0.5}
\definecolor{deepred}{rgb}{0.6,0,0}
\definecolor{lightgreen}{RGB}{34,139,34}
\definecolor{lightyellow}{RGB}{218,112,214}
\definecolor{slateblue}{RGB}{123,104,238}
\definecolor{lightblue}{RGB}{30,144,255}
\definecolor{lightBrown}{RGB}{188,143,143}
\definecolor{deepGreen}{RGB}{0,128,0}
\definecolor{deepBlue}{RGB}{0,0,255}
\definecolor{deepPurple}{RGB}{128,0,128}
\definecolor{fakeGreen}{RGB}{102,205,170}
\definecolor{Maroon}{RGB}{210,105,30}

\pgfplotscreateplotcyclelist{mycyclelist}{
    very thick,solid,black,every mark/.append style={solid},mark=*\\         
    very thick,solid,colG,every mark/.append style={solid},mark=*\\          
    very thick,solid,colB,every mark/.append style={solid},mark=*\\          
    very thick,solid,colE,every mark/.append style={solid},mark=*\\          
    very thick,solid,colD,every mark/.append style={solid},mark=*\\          
}

\pgfplotsset{
    compat=1.16,
    width=195pt,
    height=156pt,
    every axis title shift=0pt,
    max space between ticks=25,
    every axis/.append style={
            cycle list name=mycyclelist,
            ymin=0,
            enlargelimits=0.05,
            scale ticks above exponent=1,
            scaled x ticks=false,
            xtick=data,
            mark size=1pt,
            font=\Large,
            y tick label style={
                },
            ylabel shift={-5pt}
        },
    every axis legend/.append style={
            cells={anchor=west}
        }
}

\tikzset{
    plot/.append style={baseline,scale=0.65}
}

\input{figures/plotdata}
\input{figures/proto_data}
\input{figures/dggraph}

\begin{document}

\title{\sysname{}: Concurrent Smart Contract Execution with Non-blocking Reconfiguration for Sharded DAGs}


\author{Junchao Chen}
\orcid{0009-0007-6189-3681}
\affiliation{
  \institution{Exploratory Systems Lab\\ University of California, Davis}
  \country{}
}
\email{jucchen@ucdavis.edu}

\author{Alberto Sonnino}
\affiliation{%
  \institution{Mysten Labs\\ University College London (UCL)}
  \country{}
}
\email{alberto@mystenlabs.com}

\author{Lefteris Kokoris-Kogias}
\affiliation{%
  \institution{Mysten Labs}
  \country{}
}
\email{lefteris@mystnelabs.com}

\author{Mohammad Sadoghi}
\affiliation{
  \institution{Exploratory Systems Lab\\ University of California, Davis}
  \country{}
}
\email{msadoghi@ucdavis.edu}


\renewcommand{\shortauthors}{J. Chen, et. al}

\begin{abstract}
    Sharding has emerged as a critical technique for enhancing blockchain system scalability. 
However, existing sharding approaches face unique challenges 
when applied to Directed Acyclic Graph (DAG)-based protocols 
that integrate expressive smart contract processing. 
Current solutions predominantly rely on coordination mechanisms 
like two-phase commit (2PC) and require transaction read/write sets to optimize parallel execution. 
These requirements introduce two fundamental limitations: 
(1) additional coordination phases incur latency overhead, 
and (2) pre-declaration of read/write sets proves impractical 
for Turing-complete smart contracts with dynamic access patterns.

This paper presents \sysname{}, a novel sharding architecture for both single-shard transactions ($\SST{}s$) 
as well as cross-shard transactions ($\CST{}s$), and it enables non-blocking reconfiguration to ensure system liveness. 
Our design introduces four key innovations: 
First, each replica serves dual roles as a full-shard representative and transaction proposer 
(shard proposer), employing differentiated execution models: the Execution-Order-Validation (EOV) model for $\SST{}s$ 
and Order-Execution (OE) model for $\CST{}s$. 
Second, we develop a DAG-based coordination protocol 
that establishes deterministic ordering between two transaction types 
while preserving concurrent execution capabilities. 
Third, we implement a dynamic concurrency controller that 
schedules $\SST{}s$ without requiring prior knowledge of read/write sets, 
enabling runtime dependency resolution.
Fourth, 
\sysname{} introduces a non-blocking shard reconfiguration mechanism to address censorship attacks by featuring frequent shard re-assignment without impeding the construction of DAG nor blocking consensus.
This approach maintains continuous DAG construction 
and consensus progress while preventing persistent adversarial control through periodic shard reassignment. 
\sysname{} achieves a $50 \times$ improvement with 64 replicas 
over serial Tusk execution. 

\end{abstract}
\vspace{-4mm}

\keywords{
Fault-tolerant System, Blockchain, Consensus, Distributed Transaction, Sharding, Reconfiguration, Concurrent Smart Contract
}



\maketitle

\vspace{-2mm}
\section{Introduction}
The emergence of blockchain technology has spurred significant interest 
in developing resilient systems capable of processing data and transactions 
under Byzantine conditions, 
including software errors, hardware failures, network disruptions, 
and coordinated malicious attacks
~\cite{amiri2019caper,hyperledger-fabric, gorenflo2020fastfabric, el2019blockchaindb, gupta2020permissioned, nathan2019blockchain}. 
These systems enhance reliability and security by enabling collaboration 
among multiple independent participants
~\cite{casey2018impact, ge2017blockchain, herlihy2019blockchains, kamel2018geospatial, bitcoin, 
narayanan2017bitcoin, nawab2019blockplane, pisa2017blockchain, wood2014ethereum}. 

Smart contracts~\cite{smartcontract, szabo1997formalizing}, 
as programmable transaction frameworks embedded in blockchain platforms, 
empower developers to address real-world challenges through decentralized solutions~\cite{buterin2014next, kushwaha2022ethereum}. 
However, their adoption is often hindered by execution delays caused 
by runtime contract code analysis~\cite{aldweesh2019opbench}. 
To overcome this limitation, 
researchers are actively exploring performance optimization strategies 
for contract-based blockchain systems.

Several strategies have emerged to improve execution within blockchain systems. 

{\bf Transaction Processing Models:} One practical approach involves enhancing transaction processing. 
Most blockchain systems adopt the Order-Execute ($OE$) model, 
where transactions are ordered through consensus before execution~\cite{pbftj,hotstuff, rcc, poe}. 
$OE$-based systems often employ deterministic concurrency controls 
by constructing transaction dependency graphs to optimize parallelism
~\cite{qadah2018quecc, faleiro2017high, yao2016exploiting, wang2016scaling}. 
However, platforms like Hyperledger Fabric~\cite{hyperledger-fabric} 
utilize the Execute-Order-Validate ($EOV$) model, 
executing transactions before consensus, to enhance flexibility and 
Optimistic Concurrency Control (OCC)~\cite{kung1981optimistic} to improve the concurrency.

{\bf Scalability via DAG and Sharding:} Another critical area for advancement in blockchain technology is scalability, 
particularly in supporting parallel execution. 
Recent advancements leverage Directed Acyclic Graph (DAG)-based consensus protocols 
to improve scalability. 
These protocols enable replicas to submit proposals concurrently 
by building a DAG that links new proposals to historical ones. 
This architecture has gained significant recognition in the industry 
due to its robust security features, exceptional scalability, and capability to support smart contracts
~\cite{narwhat-tusk, spiegelman2022bullshark,keidar2021all,keidar2022cordial,stathakopoulou2023bbca,mysticeti,arun2024shoal,spiegelman2023shoal,shrestha2024sailfish,malkhi2023bbca, raikwar2024sok}.
Complementary to DAG-based approaches, 
sharding techniques allow parallel transaction processing across each shard,
reducing consensus overhead
~\cite{geobft,ext_byshard,kokoris2018omniledger,dang2019towards,zamani2018rapidchain, al2017chainspace, sonnino2020replay, zhang2023sharding, yang2020review}. 

However, the above approaches do not effectively improve the execution of smart contracts:
\begin{enumerate}[wide,nosep,label=\textbf{Challenge\arabic*}:,ref={Challenge\arabic*}]
       \item \textbf{Enhancing Transaction Parallelism without advanced knowledge.} 
       While $OE$-based solutions leverage dependency graphs to optimize parallelism,
       they require prior knowledge of transaction read/write 
       sets,a constraint incompatible with dynamic smart contracts. 
       Conversely, $EOV$-based approaches face high transaction conflict rates, 
       necessitating advanced conflict resolution algorithms.

       \item \textbf{Efficient Cross-Shard Transaction Processing.} 
       Existing solutions for cross-shard atomicity, such as relay-based protocols (Sharper~\cite{amiri2021sharper}, BrokerChain~\cite{huang2022brokerchain}, and SharDAG~\cite{cheng2024shardag}) and traditional Two-Phase Commit (2PC)~\cite{dang2019towards, hong2022scaling}, introduce significant delays due to inter-shard coordination. 
        While multi-shard consensus~\cite{amiri2021sharper, qi2022dag} mitigates 2PC limitations, 
        it sacrifices scalability in large-scale networks with high contention.
\end{enumerate}

The challenges described above raise the question of whether it is possible to design a sharding system that does not depend on
understanding the read and write sets of transactions,
nor requires additional coordinators to manage cross-shard transactions.

We propose \sysname{}, an innovative sharding architecture 
that seamlessly processes both single-shard ($\SST{}s$) and cross-shard transactions ($\CST{}s$) 
without centralized coordinators. 
\sysname{} effectively integrates the $OE$ and $EOV$ models, 
where the $EOV$ model enhances parallelism in the execution of $\SST{}s$, 
while the $OE$ model minimizes the abort rate when handling $\CST{}s$ across different shards. 
Furthermore, the $OE$ model ensures a coordinated execution order between these two types of transactions, 
maintaining the overall correctness of the execution process.

Similar to conventional sharding systems, 
\sysname{} organizes transactions into distinct shards to mitigate potential conflicts. 
In particular, each replica within \sysname{} corresponds to a single shard and acts as a shard proposer, proposing transactions within that shard. 
\sysname{} employs a DAG-based consensus protocol 
~\cite{narwhat-tusk, spiegelman2022bullshark,keidar2021all,keidar2022cordial,stathakopoulou2023bbca,mysticeti,arun2024shoal,spiegelman2023shoal,shrestha2024sailfish,malkhi2023bbca} 
to reach an agreement on the execution results provided by each shard proposer.

Inspired by Sui's epoch switching~\cite{sui-lutris}, 
\sysname{} employs round-robin scheduling~\cite{rasmussen2008round} to rotate shard proposers periodically, 
enhancing the system's security and liveness.
Proposer rotation is triggered on-demand if a shard fails to propose transactions within a timeout, 
with seamless DAG transitions preserving protocol continuity.

\begin{figure}[t]
        \centering
        \includegraphics[width=0.35\textwidth]{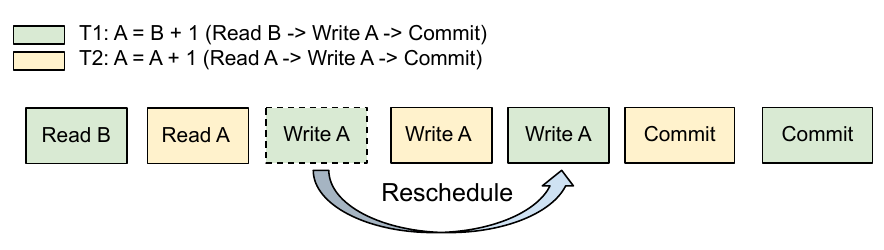}
        \vspace{-4mm}
        \caption{A transaction rescheduling to avoid abortion by moving the $Write A$ on $T_1$ after the $Write A$ on $T_2$.
        They all obtain the correct result based on their operations.}
        \label{fig:ce}
        \vspace{-4mm}
\end{figure}

\sysname{} also introduces a concurrent executor ($CE$) designed 
to improve the execution of $\SST{}s$ before reaching consensus. 
Unlike traditional concurrency protocols that primarily manage conflicts 
based on the order of arrival
~\cite{agrawal1987concurrency,franaszek1985limitations,tay1985locking},
the $CE$ utilizes a non-deterministic ordering system based on the execution run-time states 
of each transaction. 
This innovative approach minimizes the abort rate, 
thereby reducing transaction latency. 
As illustrated in \Cref{fig:ce}, the $CE$ effectively reschedules transactions 
based on their run-time executions to prevent aborts. For instance, transaction 
$T_2$, which would ordinarily conflict with the write operation of transaction 
$T_1$, can be successfully committed without cancellation.

In summary, this paper makes the following contributions.
\vspace{-1mm}
\begin{itemize}
      \item To our knowledge, \sysname{} is the first sharding consensus mechanism 
                that combines the $OE$ and $EOV$ models based on DAG-based protocols 
                without requiring any additional coordinators 
                to determine the order between $\SST{}s$ and $\CST{}s$.
      \item We introduce a new concurrency paradigm that implements a parallel preplay  
        for $\SST{}s$ (concurrent consensus execution). 
        \sysname{} preplays $\SST{}s$ followed by parallel verification without needing 
        prior knowledge of the read/write sets.
    \item   \sysname{} features a non-blocking shard reconfiguration protocol that allows 
                for the rotation of shard assignments without pausing 
                either DAG dissemination or the consensus layer. 
    \item We have implemented a concurrent executor to enhance the parallelism 
        of executing smart contracts without prior knowledge of the read/write sets on $\SST{}s$. 
        The execution engine dynamically arranges transactions based on current assessments 
        to reduce abort rates due to conflicts.
    \item Our evaluation of \sysname{} demonstrates an impressive $50 \times$ speedup over Tusk~\cite{narwhat-tusk} 
        with sequential execution using the SmallBank workload on $64$ replicas built 
        on Apache ResilientDB (Incubating)~\cite{apache-resdb, rcc}.
\end{itemize}

\vspace{-2mm}
\section{Background}
\para{Smart Contract}
A smart contract is a digital protocol, introduced by Nick Szabo in the mid-1990s~\cite{szabo1997formalizing},  designed to facilitate, verify, or enforce the negotiation or performance of a contract automatically. Unlike traditional contracts, which rely on legal systems for enforcement, smart contracts are self-executing and operate on blockchain technology. They are written in code and run on decentralized platforms like Ethereum, ensuring transparency, security, and immutability. Smart contracts eliminate the need for intermediaries, reduce the risk of fraud, and enable trustless transactions between parties. They are widely used in various applications, including decentralized finance (DeFi), supply chain management, and digital identity verification. 

These contracts consist of custom functions that operate on user accounts with associated balances. Once deployed to the network, transactions invoking the functions specified in the contract are proposed to execute predefined operations that interact with the user accounts. However, the execution of the contract code occurs within the Ethereum Virtual Machine (EVM)~\cite{eevm}, which results in the read and write sets of the contract being indeterminate prior to execution.

\para{DAG-based protocols}
\label{ss:preliminaries}
\sysname{} leverages a Directed Acyclic Graph (DAG) structure to address scalability challenges in blockchain systems. This architecture enables efficient transaction proposal mechanisms while maintaining robust security, high throughput, and native support for smart contracts. DAG-based protocols have gained significant traction in the industry due to their ability to decouple transaction dissemination from consensus processes~\cite{narwhat-tusk, spiegelman2022bullshark,keidar2021all,keidar2022cordial,stathakopoulou2023bbca,mysticeti,arun2024shoal,spiegelman2023shoal,shrestha2024sailfish,malkhi2023bbca, raikwar2024sok}.

\begin{figure}[t]
      \centering
      \includegraphics[width=0.48\textwidth]{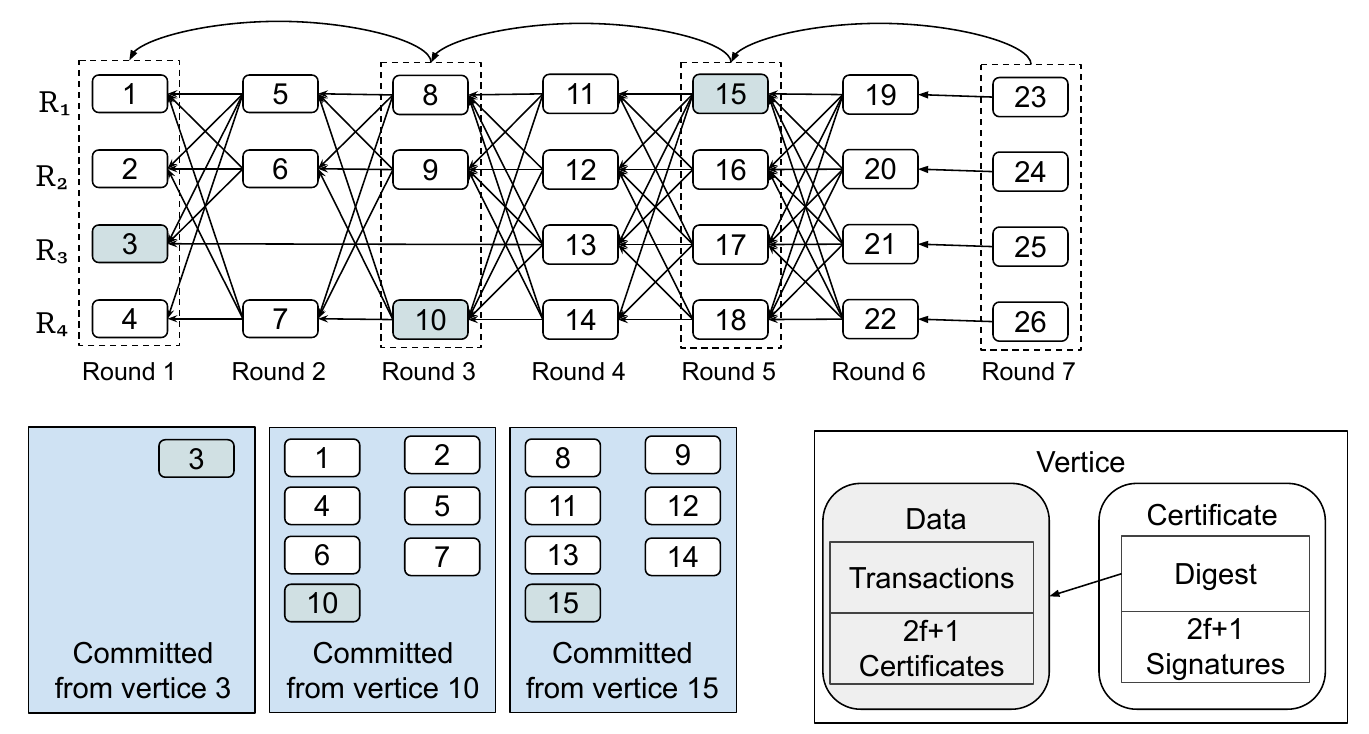}
      \vspace{-4mm}
      \caption{Overview of a DAG-based protocol on Tusk generated by 4 replicas. 
      Each vertice contains both data and a certificate. 
      Each data refers to $2f + 1$ certificates from the previous round. 
      Each certificate is generated from its corresponding data, which has received signatures 
      from $2f + 1$ replicas. 
      The leader vertice for round $r$, the solid vertice used to commit the data in its causal history, 
      will be determined before processing round $r + 2$.}
      \label{fig:tusk_dag}
      \vspace{-4mm}
\end{figure}

In contrast to traditional linear blockchains, DAG-based protocols allow multiple replicas to propose transactions concurrently. These transactions are constructed into a deterministic DAG structure, ensuring a consistent topological ordering across all honest replicas. Recent advancements in this domain, including Tusk~\cite{narwhat-tusk,spiegelman2022bullshark}, BBCA-Chain~\cite{malkhi2023bbca}, Shoal/Shoal++~\cite{spiegelman2023shoal,arun2024shoal}, Mysticeti~\cite{babel2023mysticeti}, and Cordial Miners~\cite{keidar2022cordial}, demonstrate how DAGs streamline consensus by separating data propagation from finality mechanisms.

The protocol operates in synchronized rounds, where each DAG vertex in a round consists of two components:
\begin{itemize}
    \item Data Payload: Contains transactions and references to at least $2f+1$ certificates from the prior round.
    \item Certificate: A quorum of $2f+1$ cryptographic signatures attesting to the validity of the vertex and its dependencies.
\end{itemize}

During each round, replicas broadcast their proposed vertices to the network. A vertex becomes certified once 
$2f+1$ signatures are collected, enabling it to serve as a dependency for new vertices in subsequent rounds. This iterative process ensures liveness while preserving the DAG’s causal ordering.

Vertex finalization occurs at fixed intervals, typically every two rounds in Tusk~\cite{narwhat-tusk} or three rounds in DAG-Rider~\cite{keidar2021all}. A designated leader (selected via round-robin scheduling~\cite{rasmussen2008round} or distributed randomness~\cite{boneh2001short}) proposes a vertex for commitment. A leader vertex in round $r$ is eligible to be committed during round $r+2$ (as in Tusk):
1) The replica must have received at least $2f + 1$ vertices from round $r + 1$, 
and 2) The leader vertex must be referenced by a minimum of $f + 1$ vertices in round $r + 1$.

DAG-based protocols provide the following properties:
\label{ss:dag_propoties}
\begin{itemize}
      \item Validity: if an honest replica $R$ has a vertice $B$ in its local view of the DAG,
            then $R$ also has all the causal history of $B$.
      \item Consistency: if an honest replica $R$ obtains a vertice $B_r$ in round $r$ from replica $P$,
            then, eventually, all other honest replicas will have $B_r$.
      \item Completeness: if two honest replicas have a vertice $B_r$ in round $r$,
            the causal histories of $B_r$ are identical.
\end{itemize}

\vspace{-2mm}
\section{\sysname Overview}
\sysname{} advances smart contract execution efficiency using an innovative sharding architecture augmented by a dynamic shard reconfiguration mechanism. This mechanism counters potential censorship attacks, such as post-execution transaction suppression or biased transaction selection, ensuring network integrity.

Unlike conventional sharding systems that partition replicas into isolated groups governed by separate consensus protocols, \sysname{} employs a single unified consensus protocol jointly maintained by all replicas. Each replica operates as a shard proposer, processing transactions specific to its assigned shard. To coordinate cross-shard transaction ordering and execution, \sysname{} leverages a DAG-based consensus protocol, enabling global agreement on transaction validity while preserving shard-level parallelism via the predetermined leaders underlying the consensus.

\sysname{} employs both $EOV$ model and $OE$ model to address $\SST{}s$ and $\CST{}s$.
\begin{itemize}
    \item $\SST{}s$ ($EOV$ Model):
    Transactions confined to a single shard are processed non-deterministically by a Concurrent Executor ($CE$) within the corresponding shard proposer. After the local preplay, results undergo a consensus across all shards to ensure state consistency.
    \item $\CST{}s$ ($OE$ Model):
    $\CST{}s$ employ an optimistic concurrency control protocol with deterministic finalization. Atomic commitment is achieved post-consensus, allowing tentative execution optimistically druing guaranteeing rollback-free confirmation.
\end{itemize}

\begin{figure}[t]
    \centering
    \includegraphics[width=0.48\textwidth]{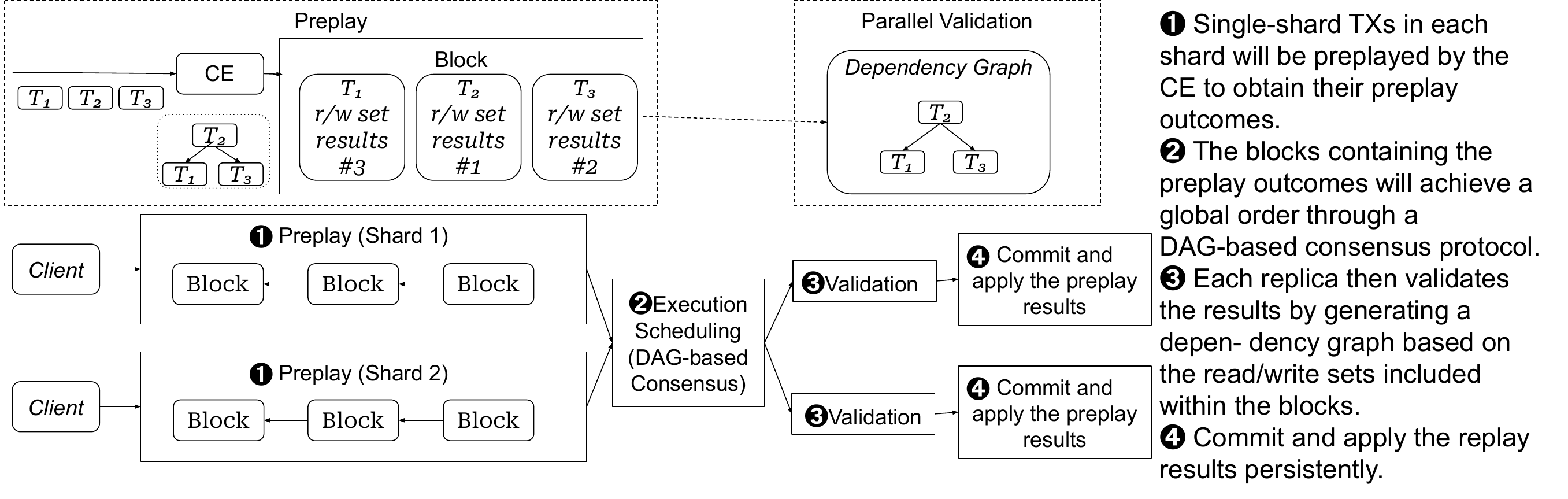}
    \vspace{-6mm}
    \caption{
        The dataflow of $\SST{}s$.
        }
    \vspace{-4mm}
    \label{fig:single_shard_fabric}
\end{figure}

\Cref{fig:single_shard_fabric}  demonstrates an example 
where two shard proposers propose two single-shard transactions. 
Further details on $\SST{}s$ and $\CST{}s$ are provided 
in Section~\ref{ss:single_txn} and \ref{ss:cross_txn}.

\sysname{} also allows the migration of each shard to another replica to avoid a censorship attack, 
such as dropping transactions.

\vspace{-2mm}
\subsection{System, Threat and Data Model}
\label{ss:system_model}

In this section, we describe the system, threat and the data model.
We leave the discussion of the $\SST{}s$, the $\CST{}s$,
and the shard reconfiguration in 
 in \Cref{ss:single_txn},
\ref{ss:cross_txn},
and \ref{ss:group_service}.


\vspace{-2mm}
\para{System Model}
Thunderbolt is composed of $n$ replicas, 
each of which serves a dual purpose: 
functioning as a shard and a replica. 
As a shard, a node maintains a distinct set or partition of data. 
In its role as a replica, it preserves a copy of the transaction log. 
Thus, each replica may also be designated as a shard proposer. 
Furthermore, clients direct transactions to the appropriate shards, 
and every replica engages in the consensus process 
to establish a cohesive order for these transactions.

In summary, a node in \sysname{} assumes three essential roles:
1). It operates as a shard proposer, managing $\SST{}s$ 
within each shard in an independent manner.
2). It acts as a replica that contributes to the consensus process.
3). It serves as a leader that commits $\CST{}s$ 
in a total order in accordance with the consensus protocol.
For clarity in the accompanying illustrations, 
the terms "replicas" and "shard proposers" may be used interchangeably. 
We will utilize the term "shard proposers" when delineating shard procedures 
and will revert to "replicas" in discussions regarding the consensus protocols.

\para{Threat model}
We consider a set of $n$ replicas (or shards), where at most $f$ of these replicas can be faulty and $n = 3f + 1$. 
The $f$ faulty replicas may exhibit any arbitrary behavior, 
including Byzantine failures, while the remaining replicas are assumed to be honest and will adhere to 
the protocol’s specifications at all times. 
Additionally, we assume that clients are not trustworthy and would not expect to send transactions to all of the shards associated with their transactions. 
The network is expected to be eventually synchronous~\cite{dwork1988consensus}, meaning that messages sent from a replica will eventually arrive within a global stabilization time ($GST$), 
which remains unknown to the replicas. 
Communications between replicas use authenticated point-to-point channels, 
with messages signed by the sender using a public-private key pair for authentication.


%
%


\para{Data model}
The data model assumes that each transaction includes a contract code
with functions to access data belonging to the sender in the shard.
The contract involves two types of operations:
<$Read, K$> and <$Write, K, V$>.
Here, $K$ represents the key required for access and $V$ is the value that needs to be written
to the key $K$.
The contract code is Turing-complete
and users could not obtain any information without execution. 
We also assume that the functions of the contract are idempotent. 

Our system is designed with the understanding that data must be partitioned and that each key is assigned a shard ID ($SID$) before it can be utilized. 
These $SIDs$ are predefined and recognized across all shards.  
They fulfill a dual function: they guide transactions to the appropriate shard proposer and 
support parallel processing among multiple shards, 
thereby enhancing overall system efficiency (\Cref{ss:cross_txn}).
The actual method for partitioning is orthogonal to our work and 
any existing techniques can be utilized~\cite{cattell2011scalable,chen2014split, zaharia2016apache, chaiken2008scope, navathe1984vertical}. 

\vspace{-2mm}
\section{Single Shard Transactions}
\label{ss:single_txn}

\sysname{} processes $\SST{}s$ through three core components: preplay, execution scheduling, and validation. During each round, a shard proposer initiates the workflow by preplaying a batch of $\SST{}s$. This generates a block containing critical preplay outcomes, which the proposer propagates to other shards via a DAG-based consensus protocol. During consensus, these blocks undergo parallel validation across shards. Once a replica commits a block, it applies the preplay results to its storage.

\para{Preplay}
\label{ss:execution}
In \sysname{}, shard proposers play a pivotal role by preplaying transactions to determine their outcomes before block creation. A concurrent executor ($CE$) is employed to preprocess batches of transactions efficiently, producing detailed outputs for each transaction. These outputs include:  the read/write sets accessed during execution, the corresponding execution results, and a scheduled execution order (as depicted in \Cref{fig:single_shard_fabric}).

The scheduled order establishes a deterministic serialized sequence, ensuring transaction results remain consistent when executed in the prescribed order. The read/write sets reveal the specific data accessed by each transaction. Crucially, these sets cannot be predetermined and are derived exclusively via the preplay process.

\para{Execution Scheduling}
\sysname{} integrates with a DAG-based dissemination layer that employs a consensus protocol to establish a total block order across replicas. In each round 
$r$, a shard proposer $R$ proposes a $CE$-generated block to the DAG, creating a new vertex in the graph. 
This vertex links to all prior vertices, including those proposed by $R$ in round $r-1$. 
For clarity, "vertices" in the DAG are hereafter referred to as "blocks" in the following sections, implying they have been certified by the protocol.

\para{Validation}
\label{ss:validation}

When a replica receives data for round $r$ via the DAG (\Cref{ss:preliminaries}), \sysname{} initiates a rigorous validation process to verify the integrity of preplay results within the blocks. Validators construct a local dependency graph using the read/write sets to enable parallel transaction validation rather than sequential checks, optimizing system throughput. Notably, blocks from round $r-1$ are validated before those from round $r$ for the same shard proposer.

During re-execution, validators confirm that the computed read sets match the values recorded in the block. A valid dependency graph guarantees consistent read-set results and ensures the final state of each key aligns with the block’s declared values. If discrepancies in read-set values are detected, the block is flagged as invalid and discarded. Until blocks are committed, replicas retain preplay results in local storage, either to process $\CST{}s$ (\Cref{ss:cross_txn}) or until DAG reconfiguration occurs (\Cref{ss:group_service}).

\vspace{-2mm}
\section{Cross-shard transactions}
\label{ss:cross_txn}
$\CST{}s$ involve multiple shards and require
consensus to establish a total execution order. This total order ensures that all $\CST{}s$ are executed consistently across the involved shards.
However, each shard proposer preplays $\SST{}s$ independently and replicates the results; in contrast, 
the total order for the $\CST{}s$ must be determined first before they can be executed (for instance, preplay optimization cannot be employed for $\CST{}s$).
Therefore, \sysname{} must coordinate the sequencing between the $\CST{}s$ and the $\SST{}s$ to guarantee consistent execution outcomes across all replicas.
Prior approaches often rely on coordinators to establish inter-shard order~\cite{cheng2024shardag,huang2022brokerchain,
dang2019towards, hong2022scaling, kokoris2018omniledger,zamani2018rapidchain,
amiri2021sharper, qi2022dag}, but these introduce communication overhead that degrades performance.

\vspace{-2mm}
\subsection{Rules for Proposals}
To address issues, 
\sysname{} leverages the DAG’s predetermined leaders to 
enforces a consistent partial order between single-shard and cross-shard transactions 
via the following rules:

\begin{enumerate}[label=G\arabic*), ref={G\arabic*}]
\item \label{enum:exe_rule}
If a leader $L$ commits both a $\SST{}$ and a $\CST{}$, the $\SST{}$ must be committed first.
\item 
\label{enum:order_rule}
If leader $L_i$ commits $\CST{}$ $X$ in round $i$, any $\SST{}$ $Y$ committed by leader  $L_j$ in round $j$ (where $j>i$) cannot execute until $X$ is finalized.
\end{enumerate}

\begin{figure}[t]
    \centering
    \includegraphics[width=0.48\textwidth]{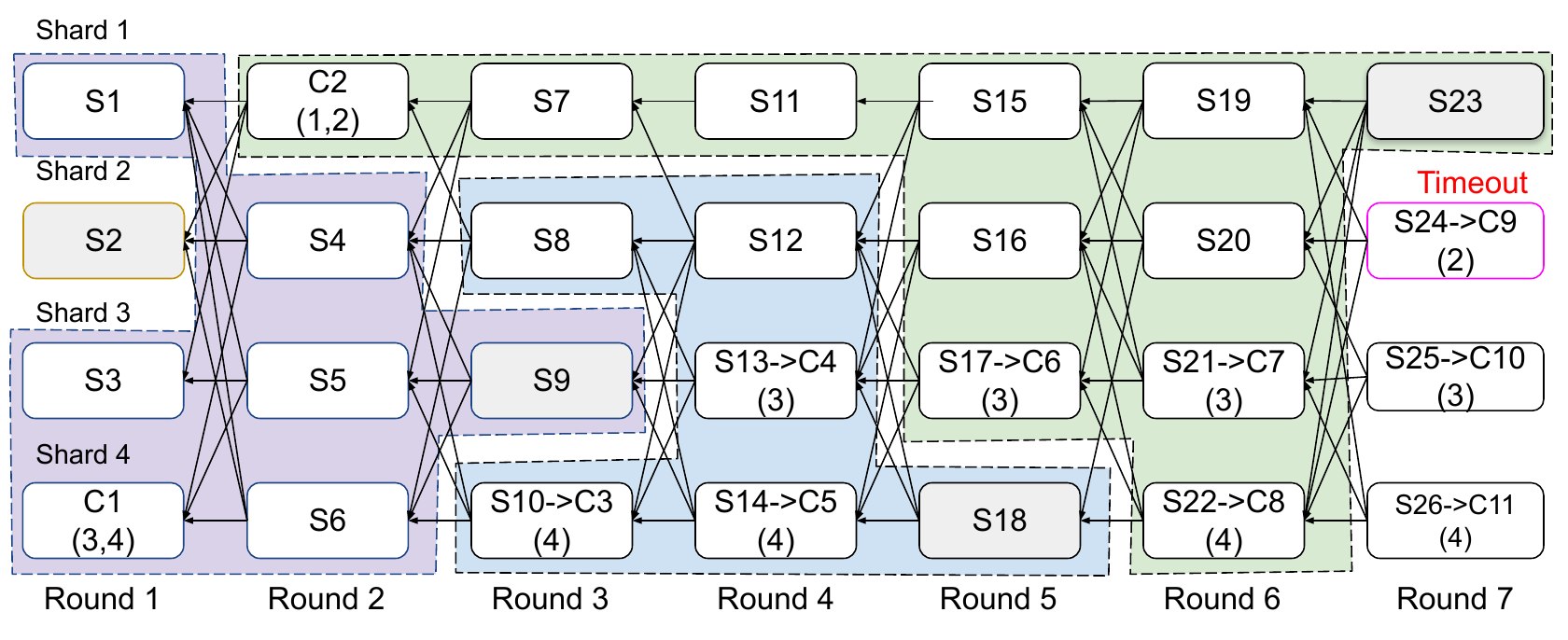}
    \vspace{-4mm}
    \caption{
        An example illustrates the commitment of each leader in the process of 
        converting $\SST{}s$ to $\CST{}s$ on Tusk. 
        In this DAG, $Si$ denotes a $\SST{}$, 
        while $Ci (\{X\})$ represents a $\CST{}$ associated with shards $\{X\}$. 
        The leaders in the odd rounds are selected using round-robin selection.
        The blocks committed by the same leader have borders of the same color.
        $Si$ will be converted to $Ci$ if there is any conflict blocks ($S10$) 
        or could not receive the leader block in time ($S24$).
        }
    \vspace{-4mm}
    \label{fig:cross_fabric}
\end{figure}

It is worth knowing that Leaders are predetermined per round using deterministic methods (e.g., round-robin or global random coins). To enforce these rules, \sysname{} applies the following proposal rules:
\begin{enumerate}[label=P\arabic*), ref={Rule P\arabic*}]
    \item 
    $\CST{}s$ are submitted directly to the DAG, bypassing the $CE$.
    \item \label{enum:cross_commit}
    Leaders committing a batch of transactions must finalize all $\SST{}s$ before $\CST{}s$.
\item \label{enum:ss_cs} 
If a shard proposer $SL$ proposes a $\SST{}$ $X$ in round $r$, and the current round’s leader $L$ differs from $SL$, $SL$ must:
    \begin{itemize}
        \item Wait for $L$’s proposal before preplaying $X$.
        \item Convert $X$ to a $\CST{}$ if any uncommitted $\CST{}$ $Y$ in $L$’s history conflicts with $X$.
        \item Otherwise, preplay $X$ and submit the results.
    \end{itemize}
    
    \item \label{enum:ss_cs_2} 
    If a shard proposer $SL$ proposes a $\SST{}$ $X$ in round $r$, and a prior leader’s uncommitted $\CST{}$ $Y$ (in round q < $r$) conflicts with $X$, $SL$ converts $X$ to a $\CST{}$.
    
    \item \label{enum:leader_commit}
    If leader $L$ in round $r$ commits a $\CST{}$ $X$ related to shard A but lacks A’s proposal in round $r-1$, $L$ defers committing A and A’s subsequent proposals.
    \item \label{enum:time_out} 
    If a shard proposer $SL$ proposes a $\SST{}$ $X$ in round $r$ but the leader $L$’s proposal is delayed beyond a timeout, $SL$ converts $X$ to a $\CST{}$.
\end{enumerate}
These rules enable \sysname{} to process $\CST{}s$ without blocking shards while maximizing parallelism for single-shard transaction preplay.
\begin{example}
\Cref{fig:cross_fabric} illustrates \sysname{}’s handling of single-shard and cross-shard transactions on Tusk:
$\SST{}s$ will be executed before $\CST{}s$ under the same leader (\ref{enum:cross_commit}).
$S10$ is converted to $\CST{}$ $C1$ due to its dependency on $S9$’s leader history (\ref{enum:ss_cs}).
$S{13}$ and $S{14}$ become $C4$ and $C5$, respectively, because $C1$ remains uncommitted until round $5$  (\ref{enum:ss_cs_2}).
    Leader $S18$ in round $5$ skips $C2$ and subsequent transactions due to missing $S11$ (\ref{enum:leader_commit}).
In round $7$, $S24$ converts $S4$ to $\CST{}$ $C9$ after failing to receive leader $S23$’s proposal (\ref{enum:time_out}).
\end{example}


\vspace{-2mm}
\subsection{Parallel Execution}
\label{ss:paral_exe}
When processing $\CST{}s$, \sysname{} preserves all sharding metadata for each transaction. Rather than sequential execution, \sysname{} employs deterministic concurrency control mechanisms, such as QueCC~\cite{qadah2018quecc}, to construct dependency graphs using cross-shard metadata (SID). This enables parallel execution while maintaining consistency across shards.

\begin{figure}[t]
    \centering
    \includegraphics[width=0.48\textwidth]{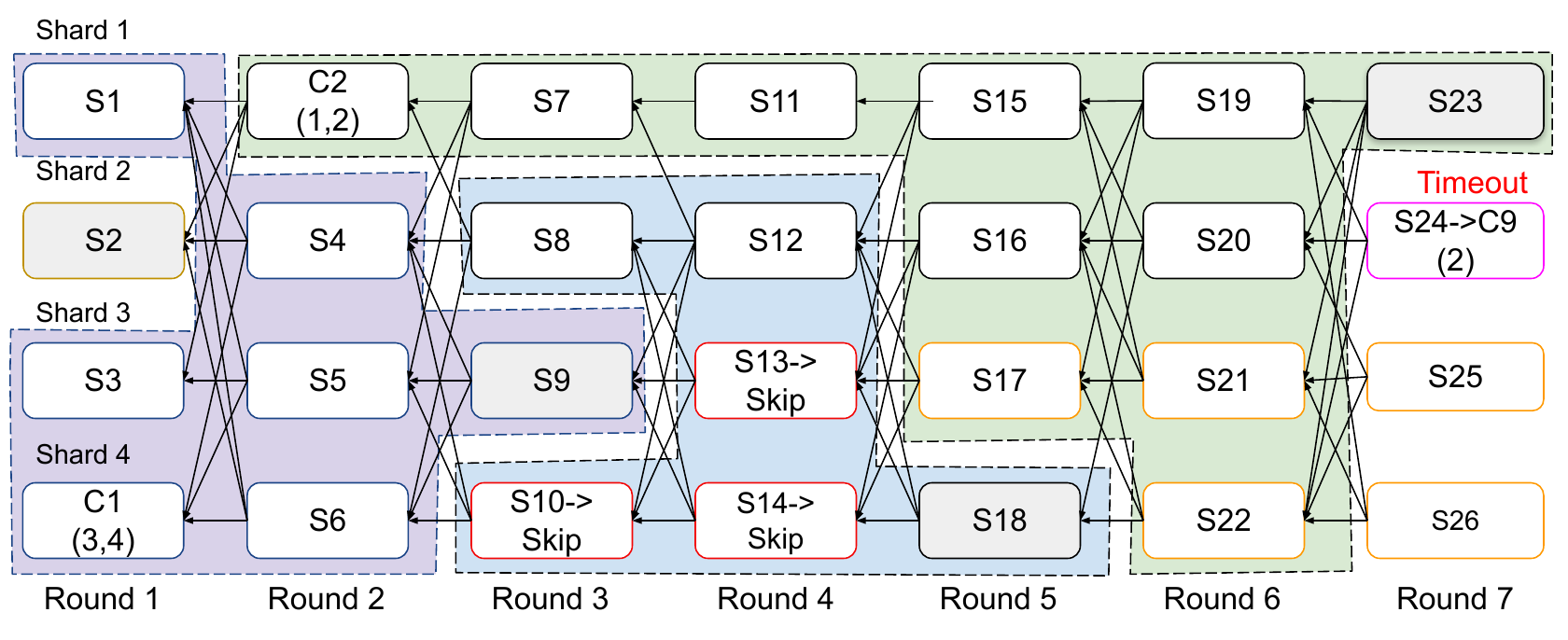}
    \vspace{-4mm}
    \caption{
        An example of proposing skipping blocks to 
        restart the preplay of the $\SST{}s$. 
        The $\SST{}s$ ($S17$, $S21$, and $S22$), which would be converted to $\CST{}s$ after round $5$ in \Cref{fig:cross_fabric},
        can replay their executions before delivering to the DAG.
        }
    \vspace{-4mm}
    \label{fig:sst_restart}
\end{figure}

\begin{figure*}[t]
    \centering
    \includegraphics[width=0.96\textwidth]{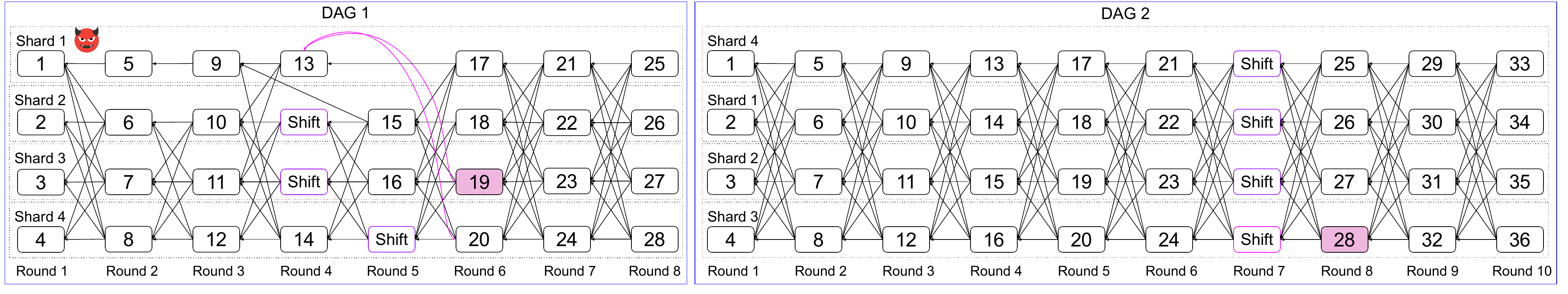}
    \vspace{-4mm}
    \caption{
        An example of DAG reconfiguration.
        Shard 1 in DAG 1 which is a malicious replica that delays the blocks from round 2, 
        which triggers a reconfiguration to a new DAG ($K = 2$). 
        Additionally, a further reconfiguration to DAG 2 will take place 
        after the replicas have proposed six rounds ($K^{'} = 6$).
    }
    \vspace{-2mm}
    \label{fig:shard_rotate}
\end{figure*}

\vspace{-2mm}
\subsection{Message Failures}
In practical deployments, network unreliability may delay message delivery. To mitigate this:
\begin{itemize}
    \item If a leader $L$ cannot include all $\SST{}s$ linked to a $\CST{}s$ due to network delays (e.g., missing shard proposals), $L$ bypasses the affected $\CST{}$ and subsequent transactions from the same shard ($C2$ in \Cref{fig:cross_fabric} on \ref{enum:leader_commit}). This prevents violations of global ordering \ref{enum:order_rule} by excluding incomplete transaction sets. These transactions are later finalized by subsequent leaders.
    \item 
If a shard proposer fails to receive the leader’s proposal within a round’s timeout window, it cannot preplay its $\SST{}$ (\ref{enum:time_out}). The proposer instead promotes the transaction to a $\CST{}$ and submits it directly to the DAG (such as $S{24}$ in \Cref{fig:cross_fabric}).

\end{itemize}

\vspace{-2mm}
\subsection{Preplay Recovering}
\label{ss:preplay_recover}
Under \ref{enum:ss_cs}, a shard proposer $SL$ must convert a $\SST{}$ to a $\CST{}$ if it detects conflicting uncommitted $\CST{}s$ in prior leader histories. While this ensures safety, it forfeits the performance benefits of preplay, such as the blocks in shard $3$ after round $3$ in \Cref{fig:cross_fabric}.

To recover preplay, $SL$  must verify that all conflicting $\CST{}s$  have been finalized by preceding leaders. Since a leader $CL_r$ in round $r$ is finalized within two subsequent rounds (\Cref{ss:preliminaries}), $SL$ can safely preplay new single-shard transactions once: 1) It receives $2f + 1$ certificates in round r + 1, and
2) $CL_r$ is referenced by at least $f + 1$ blocks in round $r + 1$.
If $SL$ identifies any conflicting $\CST{}s$ while proposing $\SST{}s$, $SL$ submits skip blocks to the DAG until prior leaders are finalized. For instance, $S10$, $S13$, and $S14$ in \Cref{fig:sst_restart} are converted to skip blocks. Consequently, subsequent transactions, like $S17$ and $S22$, are reverted to the $EOV$ model, restoring preplay capabilities.

\vspace{-2mm}
\section{Shards Reconfiguration}
\label{ss:group_service}
In Byzantine environments, compromised replicas may enable attackers to launch censorship attacks, threatening the integrity of transactions within their assigned shards.

Thunderbolt employs a round-robin selection mechanism~\cite{shreedhar1995efficient} to rotate shard proposers if a leader fails to propose transactions for $K$ rounds or at fixed intervals of $K^{'}$ rounds (where $K^{'} > K$).

This rotation serves dual purposes:
\begin{itemize}
    \item 
Preventing transaction duplication (DDOS~\cite{mirkovic2004taxonomy,douligeris2004ddos}): Proposers perform local deduplication to block malicious clients from flooding the system with redundant transactions, a known challenge in DAG-based protocols~\cite{sui-lutris,spiegelman2022bullshark,mysticeti}.
\item
Mitigating censorship: Regular rotation limits the window for a compromised proposer to disrupt operations.
\end{itemize}


Unlike traditional consensus protocols
that depend on notification messages to alter primary replicas,
\sysname{} introduces an innovative mechanism that uses the underlying DAG protocols
to facilitate the seamless transition to a new DAG and reconfigure shard proposers.
We leverage a round-robin approach to select a new proposer that if the current proposer of shard X is replica $R_i$,
the subsequent proposer of shard X will be $R_{(i\ mod\ n)+1}$.

However, the transmission of blocks to a new proposer may experience delays or omissions due to network issues or the actions of a malicious proposer.
If the new proposer for round $r$ is unable to receive the proposal committed in round $r - 1$ from the previous proposer, the new proposer will stop operations until the block arrives to ensure safety.

To address this challenge, \sysname{} introduces Shift blocks to reach agreements
among shards when a shard reconfiguration should be initiated
and switch to a new DAG to process further transactions.

A replica $R$ in a shard broadcasts a Shift block in round $r$ under the following conditions:
\begin{enumerate}
    \item $R$ receives no block from a shard proposer after round $r-K$.
    \item $R$ has proposed blocks for at least $K^{'}$ rounds.
    \item $R$ received $f+1$ Shift blocks from distinct replicas at round $r-1$.
    \item $R$ does not broadcast the Shift block before.
\end{enumerate}

\begin{example}
In \Cref{fig:shard_rotate}  where $K=2$ and $K^{'}=6$, shards 2 and 3 broadcast Shift blocks in round 4 after failing to receive blocks from shard 1 in rounds 2–3. Despite receiving a block in round 4, shard 4 broadcasts a Shift block in round 5 upon receiving two Shift blocks from peers, prioritizing liveness assurance.
\end{example}

\begin{figure}[t]
    \centering
    \includegraphics[width=0.48\textwidth]{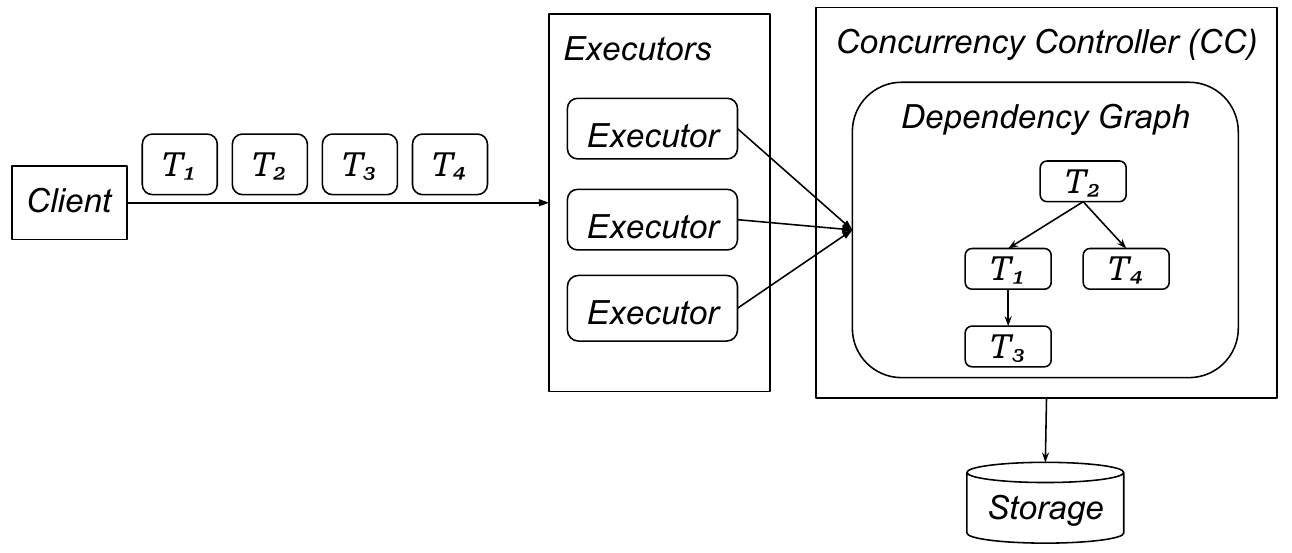}
    \vspace{-4mm}
    \caption{
        The architecture of the $CE$ consists of a set of executors that execute transactions. 
        The concurrency controller utilizes a dependency graph 
        to determine the order of transactions and their execution results.}
    \label{fig:x_protocol_intro}
    \vspace{-4mm}
\end{figure}

\para{Non-blocking Reconfiguration}
\label{ss:non-blocking}
Since at $2f+1$ honest replicas will commit the same block 
during the same round (\Cref{ss:dag_propoties}), 
we designate the round of the first commit block that includes $2f + 1$ Shift blocks 
as the ending round for the current DAG.
Then,
\textbf{each replica will begin a new DAG from the same ending round 
to ensure the system's safety.} 
For instance, shard $2$ will propose block $15$ at round $5$ after proposing a Shift block at round $4$. 
Finally, block $19$ from shard $3$ at round $6$ is selected as the leader during the consensus process. 
Subsequently, shard $3$ commits all historical blocks from block $19$, including the Shift blocks from other shards. 
At this point, all shards will transition to the new DAG (DAG 2) and 
start executing transactions within the new shard.
Moreover, each replica will propose a Shift block every $K^{'} = 6$ blocks in the new DAG (DAG 2) 
to facilitate a transition to the next DAG. 
This measure is intended to protect against censorship attacks, 
which may involve dropping transactions, failing to propose blocks, 
or prioritizing certain transactions over others. 
Additionally, the non-blocking mechanism provides protection against malicious proposers 
as the reconfiguration process requires a minimum of $2f+1$ Shift blocks to be effective.

\para{Uncommitted Transactions}
Due to the two-round leader commitment latency, transactions uncommitted in the ending round of a DAG are discarded and must be resubmitted. Only transactions from the last two rounds or those excluded by the leader are affected. Clients will automatically retransmit transactions lost due to the reconfiguration.

\para{Censorship Attacks}
Malicious replicas may attempt censorship via message drops, transaction rescheduling, DDoS attacks, or proposal halting~\cite{mirkovic2004taxonomy,douligeris2004ddos}. As each replica governs an entire shard in \sysname{}, such attacks can paralyze specific shards. 
The reconfiguration mechanism counters this by periodically reassigning shards to new replicas, 
limiting the impact window of compromised replicas.

 \begin{table}[t]
     \renewcommand\arraystretch{1.3}
     \centering
     \footnotesize
     \begin{tabular}{c|c|c|c|c}
         \hline
         Time & Transactions                & Operations                              & \makecell{Dependencies}           & \makecell{Execution\\ Order} \\
         \hline
             0    & Initial DB                  & \makecell{$D=3$}                                        & \{\}                                                & \{\}                                   \\
         \hline
         1    & $T_1$:(W, $D$, 3)           & $T_1$ writes $D=3$                      & \{$T_1$\}                         & \{\}                    \\
         \hline
         2    & $T_2$:(R, $D$, 3)           & \makecell{$T_2$ reads $D$ on $T_1$:                                                                   \\ $(D=3)$}                               & \{$T_1\rightarrow{}T_2$\}                           & \{\}                      \\
         \hline
         3    & $T_3$:(R, $D$, 3)           & \makecell{$T_3$ reads $D$ on $T_1$:                                                                   \\ $(D=3)$}                               & \{\makecell{$T_1\rightarrow{}T_2$\\$T_1\rightarrow{}T_3$}\} & \{\}                      \\
         \hline
         4    & $T_3$: Commit               & Wait for $T_1$                          & \{\makecell{$T_1\rightarrow{}T_2$                           \\$T_1\rightarrow{}T_3$}\} & \{\}                      \\
         \hline
         5    & $T_1$:(W, $D$, 5)           & \makecell{$T_1$ writes $D=5$.                                                                         \\ \textcolor{red}{Abort $T_2, T_3$}}           & \{$T_1$\}                                       & \{\}                    \\
         \hline
         6    & \makecell{$T_3$:(R, $D$, 5)                                                                                                         \\(\textcolor{blue}{Re-execute})}  & \makecell{$T_3$ reads $D$ on $T_1$:\\ $(D=5)$} & \{$T_1\rightarrow{}T_3$\}                      & \{\}                      \\
         \hline
         7    & $T_1$: Commit               & Commit $T_1$                            & \{$T_1\rightarrow{}T_3$\}         & \{$T_1$\}               \\
         \hline
         8    & $T_3$: Commit               & Commit $T_3$                            & \{$T_1\rightarrow{}T_3$\}         & \{$T_1, T_3$\}          \\
         \hline
         9    & $T_2$: (W, $D$, 3)          & \makecell{\textcolor{red}{Invalid}\\ \textcolor{red}{and re-execute}} &                                   &                         \\
         \hline
         10   & \makecell{$T_2$:(R, $D$, 5)                                                                                                         \\(\textcolor{blue}{Re-execute})}  & \makecell{$T_2$ reads $D$ on $T_1$:\\ $(D=5)$} & \{$T_2$\}                                       & \{$T_1,T_3$\}              \\
         \hline
         11   & $T_2$: (W, $D$, 2)          & $T_2$ writes $D=2$                      & \{\makecell{$T_1\rightarrow{}T_2$                           \\$T_1\rightarrow{}T_3$}\}                                      & \{$T_1,T_3$\}              \\
         \hline
         12   & $T_2$: Commit               & Commit $T_2$                            & \{\makecell{$T_1\rightarrow{}T_2$                           \\$T_1\rightarrow{}T_3$}\}                                           & \{$T_1, T_3, T_2$\}        \\
         \hline
     \end{tabular}
     \caption{An example of generating the dependency while executing 
                 the transactions $\{T_1, T_2, T_3\}$ that access the data $D$ 
                 and determining the execution order.}
     \label{table:protocol_example}
     \vspace{-4mm}
 \end{table}

\vspace{-2mm}
\section{Concurrent Executor}
\label{ss:execution_engine}

The concurrent executor ($CE$) is a critical component enabling \sysname{} to process $\SST{}s$ concurrently during the preplay phase. As a nondeterministic concurrency control executor, $CE$ generates a serialized execution order, read/write sets, and execution results for transaction batches. These outputs allow any replica to independently verify correctness, even though the $CE$-derived order may differ from the transactions’ arrival sequence.

The architecture of $CE$ is illustrated in \Cref{fig:x_protocol_intro},
where a set of executors executes the transactions,
and a concurrency controller ($CC$) determines the execution results among the transactions.

Transactions progress through a two-phase data flow: an execution phase (operations processing) and a finalization phase (commit/abort decisions). Table~\ref{table:protocol_example} exemplifies this workflow using transactions $\{T_1, T_2, T_3\}$.

\begin{figure}[t]
    \centering
    \includegraphics[width=0.36\textwidth]{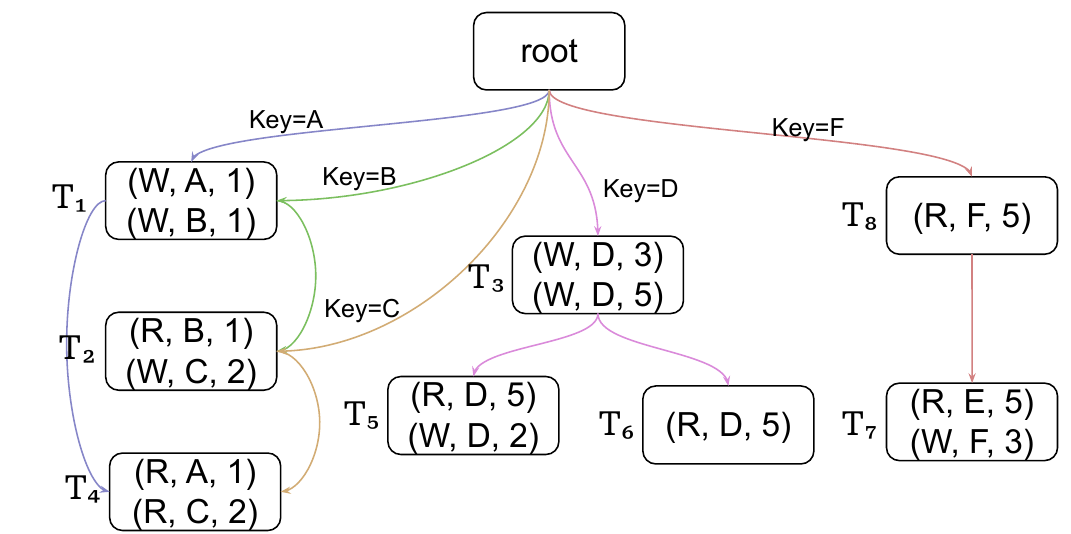}
    \vspace{-4mm}
    \caption{Dependency Graph on $\sysname{}$. Edges with the same color represent a dependency graph with a specific key.}
    \vspace{-6mm}
    \label{fig:dg}
\end{figure}


\vspace{-2mm}
\subsection{Execution Phase}
During the execution phase, the executors access the data within $CC$ directly and
$CC$ maintains a dependency graph to keep track of the relationship between transactions
and all the results are stored in the graph directly to avoid accessing the disk IO.
The critical characteristic of $CC$ is that
$CC$ only maintains the graph based on the current operations among the transactions 
without requiring any read/write set knowledge.
Furthermore, $CC$ mechanism is nondeterministic, 
which means it can arrange transactions in any order based on their current states. 
The order of two transactions will not be established until a dependency is created.
For instance, in the case of two conflicting transactions, denoted as $T_1$ and $T_2$, 
which both only write the same key, either transaction may be ordered first. 
Both execution orders, $[T_1, T_2]$ and $[T_2, T_1]$, are considered valid. 
A dependency is established based on the commit times of these transactions. 
If $T_1$ commits before $T_2$, the execution order $[T_1, T_2]$ indicates that $T_1$ is prioritized over $T_2$. 
Additionally, a third transaction, $T_3$, can influence this dependency. 
A dependency is formed if $T_3$ reads the value following $T_2$’s write before $T_1$ and $T_2$ commit, 
resulting in a unique execution order of $[T_1, T_2, T_3]$. 
As a result, a dependency indicator will be generated for the two conflicting transactions 
once a read-write conflict arises or when both transactions have been committed.

While receiving an operation from a transaction $T$ sent by the executors identified accessing the key $K$,
denoted as $O_k$, $CC$ checks the relationships among the transactions.
If $T$ conflicts with other transactions or has been aborted by other transactions,
the operation $O_k$ will not be considered valid.
For instance, $T_2$ at time 9 in Table~\ref{table:protocol_example} is an example of an invalid operation,
as it was aborted by $T_1$ at time 5 due to its outdated read in $D$.
Transaction $T$ will be aborted in such cases and require reexecution.
Otherwise, if the operation $O_k$ is valid,
it will be added to the dependency graph (\cref{ss:def_g}) and obtain the operation result,
such as the value $V$ that $O_k$ intends to read.
We can obtain the operation results from other transactions directly based on the dependency graph
to allow reading uncommitted data, such as $T_2$ reads $D$ on $T_1$ at time 2.

\vspace{-2mm}
\subsection{Finalization Phase}
\label{ss:ee_commit_phase}
During the finalization phase, the executor informs $CC$
that the executor has completed all the operations.
Then $CC$ will update the results to the storage asynchronously once all its dependencies have been committed and assign the execution order to the transactions.
If $CC$ has terminated the transaction due to conflicts with other transactions,
$CC$ aborts the transaction and notifies the executor to restart the execution.

\vspace{-2mm}
\section{Dependency Graph in CC}
\label{ss:dep}
This section explains the dependency graph $G$, 
which is central to the $CC$ component. 
It plays a crucial role in maintaining the causal relationships between transactions 
during the preplay process in $CE$. 
The $CC$ component ensures that the sequential order of execution defined by $G$ is valid.

\vspace{-2mm}
\subsection{Dependency Graph Construction}
\label{ss:def_g}
A Dependency Graph is a graph $G(V, E)$
that plays a crucial role in tracking the causal relationship between transactions in $CC$.
Each node $v\in V$ represents a specific transaction.
Additionally, each edge $e (u, v, k) \in E$ indicates a connection between two transactions $u$ and $v$
on a key $K$.
This relationship is represented as $u \rightarrow{}_k v$.
For example, in \Cref{fig:dg},
transaction $T_5$ generates an edge $e (T_3,T_5, D)$
from $T_3$ because $T_5$ acquires the value 3 of key $D$ from $T_3$.
Without loss of generality,
we have assigned a root node denoted $R$ and added edges $e (R, u, k)\in E$
for each $u \in V$ that accesses the key $K$ but does not have any incoming edge on key $K$, 
such as $T_7$ and $T_8$ in \Cref{fig:dg}. 

If the graph $G$ is acyclic, a sequential order can be established
by generating a topological order.
It is crucial that every transaction
must obtain the same causal order in any topological order from $G$ to ensure consistency.
Therefore, $G$ is considered a valid graph only if any sequential order generated
from the topological order is a valid serialization order and produces the same results.
By following any correct order, all transactions will yield the same execution results.
However, because of the nondeterministic characteristic, 
the results may not be the same as those executed in their arrival order.
Each node $u$ maintains all records of the operations triggered by a transaction $u$,
including the resulting values.
The sequence of the linking nodes establishes the order of commitment 
between the two transactions. 

\begin{figure}[t]
    \centering\begin{tabular}{ccc}
        \includegraphics[width=0.115\textwidth]{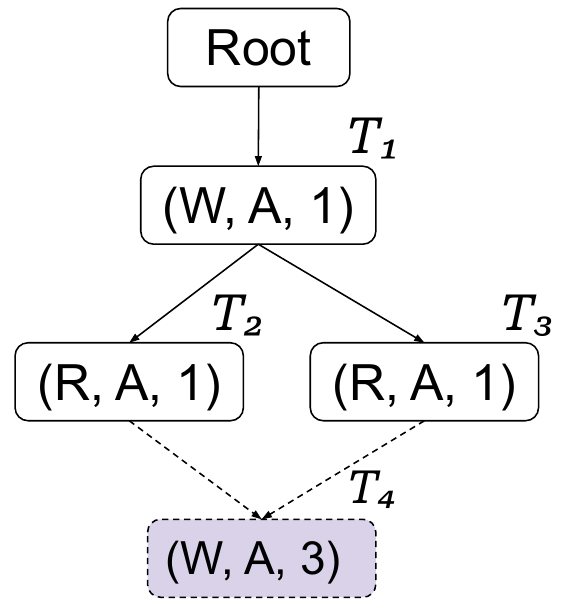} &
        \includegraphics[width=0.17\textwidth]{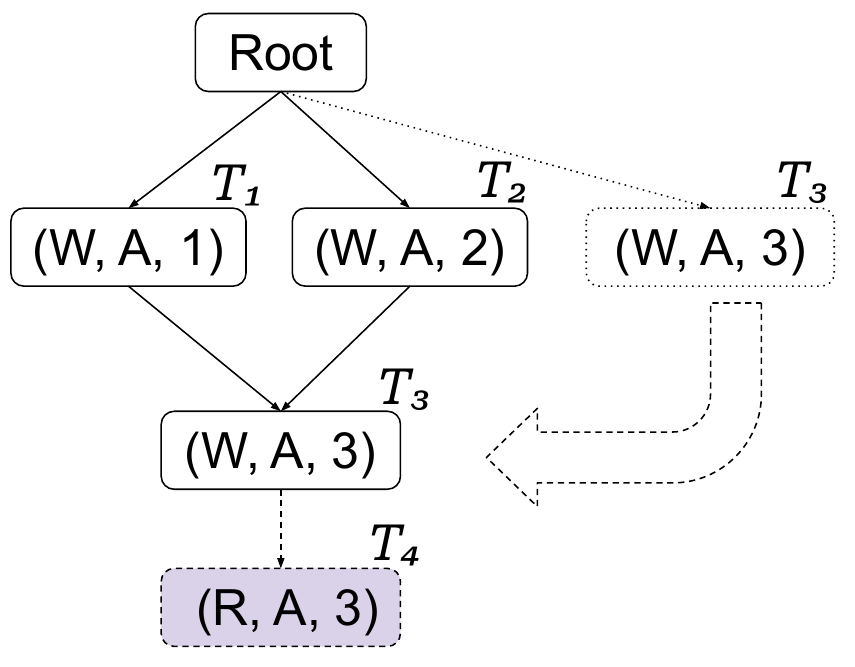} &
        \includegraphics[width=0.15\textwidth]{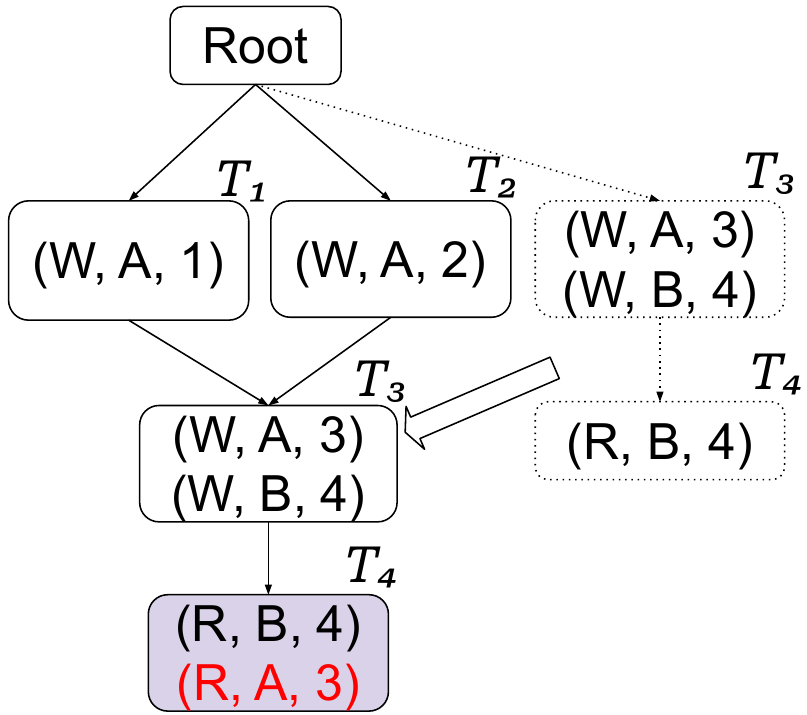} \\
        a) & b) & c) \\
    \end{tabular}
    \vspace{-4mm}
    \caption{
        An example of incorporating operations into the dependency graph and 
        adjusting the graph to ensure correctness.
        a) $T_4$ writes A=3.  b) $T_4$ reads A.
        c) $T_4$ reads key $A$ on its existing node and retrieves the value from $T_3$.}
    \vspace{-6mm}
    \label{fig:add_new}
\end{figure}


Since a transaction is an atomic commitment, 
we combine the internal operations to simplify the in-node states.
However, to trace the conflicts between two nodes,
we must retain the first operation if it is a read and the last operation if it is a write,
to ensure that the causal relationship is not lost. 
Thus, we remain at most two operations in the nodes: the first read and the last write.

To help illustrate the algorithm,
we define the types of each node depending on the operations it contains on a key:
\begin{itemize}
    \label{ss:rw_node}
    \item A node $v \in V$ is a read node $R^{k}_v$ if the first operation on key $K$ is a read.
    \item A node $v \in V$ is a write node $W^{k}_v$ if $v$ contains write operations on key $K$.
    \item The root node $R$ is a write node.
\end{itemize}


\vspace{-2mm}
\subsection{Generating New Nodes} \label{ss:add_op}
This section presents the process of adding operations
from a new transaction to the dependency graph $G$.

$CC$ creates a new node whenever an operation $O_k$ is received from a new transaction $T$.
If $O_k$ is a write operation, $T$ needs to establish a connection to each casual relation.
To avoid pointing to the root and assuming that the earlier transaction will commit first,
the non-write nodes $v$ on key $K$, which only contains reads,
without any outgoing edges (not dependent by other nodes) are selected,
and edges $e(v,u,k)$ are added, pointing to $u$ (\Cref{fig:add_new} (a)).

On the other hand, when a read operation is performed, 
$CC$ selects the latest write node $u$ to obtain the latest value
or selects the root to read the data value from storage if no write nodes exist. 
If the write node $u$ is selected, 
we need to make all other write nodes $W^k_v$
contain a path to $u$ to guarantee the correctness for the read after write between $u$ and $v$.
Finally, the operation and its result <Type, Key, Result> will be written into the node $u$.
An example of adding a new read operation on $A$ from $T_4$ is depicted in \Cref{fig:add_new} (b).
$T_4$ selects $T_3$ to read to obtain $A=3$ and adds an edge from $T_3$
and a record <R, $A$, 3> is logged down in the node.
Then $T_3$ will add two edges from $T_1$ and $T_2$, respectively.


\begin{figure}[t]
    \centering\begin{tabular}{ccc}
        \includegraphics[width=0.12\textwidth]{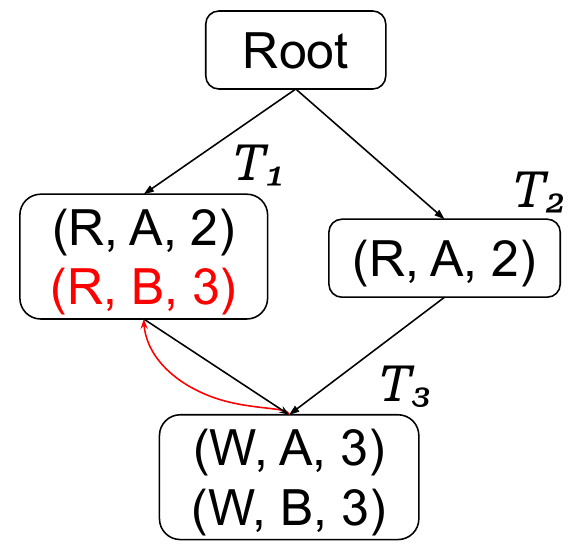} &
        \includegraphics[width=0.10\textwidth]{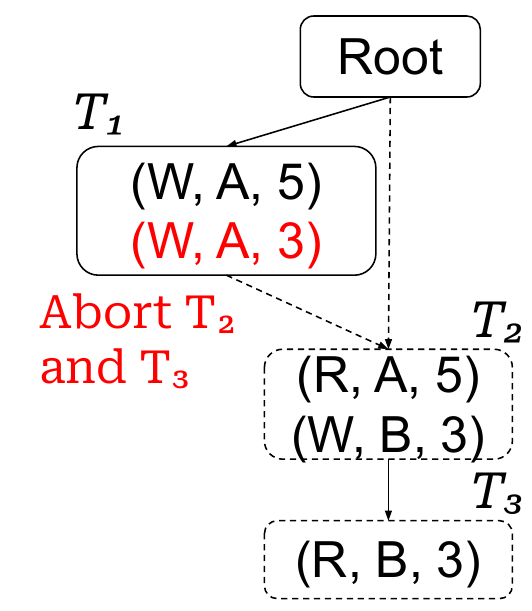} \\
        a) & b) 
    \end{tabular}
    \vspace{-4mm}
    \caption{Cycle of conflicts and cascading aborts.
        a) $T_1$ reads $B$ and adds a dependency from $T_3$ following the rules in \Cref{ss:add_op}.
        b) Cascading aborts from $T_1$ since $T_1$ wants to write $A$  that breaks the read on $T_2$.}
    \vspace{-5mm}
    \label{fig:cycle}
\end{figure}

\vspace{-2mm}
\subsection{Operations on Existing Nodes}
When receiving an operation $O_k$ for the key $K$ from an existing transaction $T$ in $G$,
$CC$ will select the corresponding node $u$ to attach the record.
If $O_k$ is a read operation,
the result will be directly retrieved if $u$ contains the record for key $K$.
Otherwise, it will proceed with the new node operation as specified in \Cref{ss:add_op} 
to choose a previous one to access the value.
\Cref{fig:add_new} (c) illustrates an instance where $T_4$ reads key $A$
as its second operation and retrieves the value from $T_3$.
If $O_k$ is a write operation, the operation will be appended to the node.
\vspace{-2mm}
\subsection{Conflict Detection}
\label{ss:data_conflict}
Appending the records to an existing node may lead to transaction conflicts.
For instance, a transaction updates the value again but it has been read by another transaction or a dependency cycle is created due to the dependency on another key 
since we always find the latest write to retrieve the value.
\Cref{fig:cycle} (a) depicts a scenario in which $T_1$ attempts to retrieve
the value of $B$ from $T_3$, which has established a dependency from $T_1$ due to key $A$, which results in the creation of a dependency cycle.
In this case, $CC$ will try to read the value from its ancestor, 
like $B$ reads the value from the $Root$ \Cref{fig:cycle} (a).
If there is still any conflict with other transactions, $CC$ will trigger the abort process. 

Once conflicts are detected, $CC$ triggers an abort process as follows:
\begin{enumerate}
    \item If $u$ only contains read operations, abort $T$ itself.
    \item If $u$ contains write operations, cascading abort from $T$.
\end{enumerate}


In \Cref{fig:cycle} (b), we need to abort $T_2$ and $T_3$ since $T_1$ contains a write operation.
However, in \Cref{fig:cycle} (a), we only need to remove $T_1$ and keep $T_3$ alive.

%
%

\vspace{-2mm}
\section{\sysname{} Correctness Analysis}
\label{ss:correctness}

In this section, we conduct an analysis of the safety and liveness properties of \sysname{}. 
Safety is defined as if two conflicting transactions, $T_1$ and $T_2$, 
are executed within an honest replica in a specific order $O$ (for instance, $T_1 < T_2$), 
it is expected that all other honest replicas will also execute $T_1$ and $T_2$ in the same order $O$.
Liveness, on the other hand, is characterized by the assurance that client requests will consistently 
receive a response.

\para{Proof of Safety}
We prove the safety of \sysname{}. We will first analyze the safety \textbf{within the same DAG}.

\begin{theorem}
    \label{th:safety1}
In the case of two conflicting transactions, $T_1$ and $T_2$, 
which may occur as either $\SST{}s$ within the same shard 
or as $\CST{}s$, if $T_1$ is executed prior to 
$T_2$ in an honest replica, 
any other honest replica will also execute $T_1$ before $T_2$.
\end{theorem}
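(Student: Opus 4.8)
The plan is to split the statement into its two stated regimes---(i) $T_1$ and $T_2$ are both $\SST{}s$ of the same shard, and (ii) both are $\CST{}s$---and to argue each by reducing the \emph{local} execution order to an object that is fixed inside a block and reproduced identically at every honest replica via the DAG properties of \Cref{ss:dag_propoties} (Validity, Consistency, Completeness).

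For regime (i), the two transactions are handled by a single shard proposer $R$, whose concurrent executor $CE$ emits the read/write sets, the results, and a serialized order, all embedded in the block $R$ delivers to the DAG (\Cref{ss:single_txn}). The key observation I would make is that because $T_1$ and $T_2$ \emph{conflict}, the dependency graph $G$ of \Cref{ss:dep} necessarily contains a directed path between their nodes (through a read-after-write, write-after-read, or write-after-write edge, possibly via intermediaries), so their relative order is not a free topological choice but is pinned by $G$. I would then invoke Consistency and Completeness: every honest replica eventually obtains the identical block with identical read/write sets, and hence reconstructs the same $G$ during validation (\Cref{ss:validation}). Since validation is a deterministic function of the block---rebuild $G$, check acyclicity, re-execute and confirm each read set matches the recorded value---the recorded order is accepted as a valid serialization everywhere, and the pinned relative order of $T_1,T_2$ is identical across replicas. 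A small subcase to dispatch is when the two transactions lie in consecutive blocks of $R$ (rounds $r-1$ and $r$); here the rule that round $r-1$ is validated before round $r$ (\Cref{ss:validation}) preserves the inter-block order uniformly.

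For regime (ii), $\CST{}s$ follow the $OE$ model: they are first totally ordered by the DAG-based consensus through the predetermined leader commitments, and only then executed by a deterministic concurrency controller (e.g., QueCC) on the shared $SID$ metadata (\Cref{ss:paral_exe}). Consensus safety guarantees that all honest replicas commit the same sequence of leader blocks, hence the same total order of $\CST{}s$; feeding this identical input to the identical deterministic scheduler yields the same execution order, and in particular the same relative order for the conflicting pair $T_1,T_2$.

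The main obstacle is regime (i), namely reconciling the \emph{nondeterminism} of $CE$ with the claimed determinism of the cross-replica order. The resolution I would stress is that no honest replica ever re-derives an order from scratch: the proposer fixes one valid serialization and ships it inside the block, and each validator only checks---as a deterministic function of the immutable block contents, guaranteed identical by Consistency---that this shipped order is serializable and that its read sets are reproducible. Thus the only ordering decision for a conflicting pair is made once, by the proposer, and is then uniformly accepted or uniformly rejected; combined with Validity (an honest replica holding $B$ holds all of its causal history), this forces every honest replica that executes both transactions to execute them in the proposer's recorded order, establishing \Cref{th:safety1} within a single DAG.
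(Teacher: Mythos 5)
Your proof is correct, but it is considerably finer-grained than the paper's own argument, and in one place takes a genuinely different route. The paper's proof rests on just two facts: the DAG guarantees that blocks from the same proposer are committed in the order they were proposed, so two conflicting $\SST{}s$ of one shard inherit the proposer's order; and $\CST{}s$ are totally ordered by consensus. It never examines the intra-block case at all. Your regime~(i) argument splits into intra-block and inter-block subcases and, for the intra-block case, argues that the proposer's $CE$-emitted serialization is a fixed object inside the block, reproduced identically everywhere by Consistency/Completeness (\Cref{ss:dag_propoties}) and uniformly accepted by the deterministic validation of \Cref{ss:validation} --- this fills a gap the paper's proof glosses over, and your closing observation that the ordering decision for a conflicting pair is made exactly once (by the proposer) and only checked downstream is the right resolution of the apparent tension with $CE$'s nondeterminism. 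One small caveat: your claim that a conflicting pair necessarily has a directed path in the dependency graph $G$ is not true at all intermediate times --- for pure write-write conflicts the paper only establishes the dependency at commit time (\Cref{ss:execution_engine}) --- but since the block ships the finalized serialized order, your fallback argument via the shipped serialization covers this, so there is no genuine gap. Your regime~(ii) adds the observation that the identical committed sequence is fed to an identical deterministic scheduler (\Cref{ss:paral_exe}), which the paper leaves implicit. In short: same two-regime decomposition as the paper, but your version makes explicit the block-content determinism that the paper's one-line appeal to block order silently relies on, at the cost of invoking machinery (dependency graphs, validation) the paper's proof does not need.
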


\begin{proof}
The DAG protocol ensures that the sequence of two blocks proposed by the same proposer
aligns with their commit order. 
Consequently, the order of two $\SST{}s$ 
proposed by the same proposer will be preserved. 
Furthermore, the order of $\CST{}s$ is established 
by the consensus, resulting in a consistent global order. 
This leads to uniform execution of transactions $T_1$ and $T_2$
across all honest replicas, 
irrespective of whether the transactions are $\SST{}s$ or $\CST{}s$.
\end{proof}
\begin{theorem}
    In the scenario where two conflicting transactions 
    occur,where $T_1$ is a $\SST{}$ and $T_2$ as a $\CST{}$,
    if an honest replica executing $T_1$ prior to $T_2$, 
    any other honest replicas will execute $T_1$ before $T_2$ as well.
\end{theorem}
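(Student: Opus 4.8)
The plan is to reduce the mixed $\SST{}$/$\CST{}$ case to the deterministic leader commit order already exploited in \Cref{th:safety1}, using the conversion rules \ref{enum:ss_cs} and \ref{enum:ss_cs_2} as the bridge. The central observation I would establish first is that, because $T_1$ is \emph{executed as a $\SST{}$} rather than promoted to a $\CST{}$, its shard proposer must have found no uncommitted conflicting $\CST{}$ in the relevant leader histories at the round $r$ where $T_1$ was preplayed; otherwise \ref{enum:ss_cs} (for the current leader's history) or \ref{enum:ss_cs_2} (for a prior leader's history) would have forced the conversion of $T_1$, placing it under the both-$\CST{}$ regime of \Cref{th:safety1}. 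Hence the conflicting $\CST{}$ $T_2$ is, relative to $T_1$, either already finalized before round $r$ or committed strictly later in the leader sequence.

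First I would dispatch the case where $T_1$ and $T_2$ are committed by the same leader $L$: rules \ref{enum:exe_rule} and \ref{enum:cross_commit} force every $\SST{}$ to be finalized ahead of every $\CST{}$ under a single leader, so $T_1 < T_2$ holds, and since the leader's internal commit order is fixed by consensus it holds identically at every honest replica. For distinct leaders $L_j$ committing $T_1$ and $L_i$ committing $T_2$, the hypothesis that the replica executes $T_1$ first excludes the possibility that $T_2$ was finalized before $T_1$'s preplay (which would give $T_2 < T_1$); the only remaining possibility, consistent with the non-conversion of $T_1$, is that $T_2$ is committed strictly later in the leader sequence, and \ref{enum:order_rule} ensures execution respects this commit order, so the order $T_1 < T_2$ is well defined.

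The final step is to lift this local order to all honest replicas. Here I would invoke the DAG Consistency and Completeness properties of \Cref{ss:dag_propoties}: the certified blocks of $T_1$ and $T_2$, together with their causal histories and the deterministic (round-robin or coin-based) leader selection, are identical across honest replicas, so the leader commit sequence---and therefore the relative commit position of $T_1$ and $T_2$---is the same everywhere. Combined with the deterministic re-execution of \Cref{ss:validation}, this forces every honest replica to execute $T_1$ before $T_2$.

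I expect the main obstacle to be arguing that the \emph{classification itself} is agreed upon---that no honest replica treats $T_1$ as a $\SST{}$ while another treats it as a $\CST{}$, and that the set of uncommitted conflicting $\CST{}s$ seen at the decisive round is globally consistent. This is delicate when the $T_1$-shard proposer is Byzantine and may mislabel a transaction, which the validation of \Cref{ss:validation} and the requirement of $2f+1$ certificates must rule out. I would close this gap by noting that a block's type and causal history are frozen once it is certified and committed, so by Completeness all honest replicas evaluate \ref{enum:ss_cs} and \ref{enum:ss_cs_2} against the same committed prefix and reach the same conversion verdict, after which the order between $T_1$ and $T_2$ follows deterministically as above.
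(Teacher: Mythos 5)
Your overall strategy is substantively the same as the paper's: the paper also decides the order by comparing the round $r$ of $T_1$'s proposal with the round $Y$ of the leader committing $T_2$, and uses the conversion rule \ref{enum:ss_cs_2} for the case $r \geq Y$ --- exactly your ``non-conversion of $T_1$'' observation. The paper frames it as a contradiction between two replicas $R_1, R_2$ executing in opposite orders, while you argue directly that the commit position of $T_2$ relative to $T_1$ is globally determined; that difference is cosmetic. Your closing paragraph on the consistency of the \SST{}/\CST{} classification across honest replicas goes beyond what the paper writes down and is a reasonable strengthening.

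There is, however, one genuine soft spot. Your dichotomy ``either $T_2$ is finalized before round $r$, or $T_2$ is committed strictly later in the leader sequence'' is not delivered by the conversion rules plus the hypothesis alone. Rules \ref{enum:ss_cs} and \ref{enum:ss_cs_2} only constrain what $T_1$'s proposer sees from leaders at rounds $q \leq r$; they say nothing about a leader $Y$ with $r < Y$ that commits $T_2$ before $T_1$'s own committing leader $X$, and your appeal to the hypothesis only excludes $T_2$ being finalized before $T_1$'s preplay, not this intermediate window $r < Y \leq X$. The rule that closes it is \ref{enum:leader_commit} (combined with \ref{enum:cross_commit}): such a leader $Y$ either has $T_1$'s shard proposal in its causal history, in which case it must commit the \SST{} $T_1$ first, or lacks it, in which case it must defer $T_2$. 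This is precisely the second pillar of the paper's proof (``if $r < Y$ then $T_2$ cannot be committed,'' by \ref{enum:leader_commit}), and you never invoke it. Relatedly, \ref{enum:order_rule}, which you cite to conclude $T_1 < T_2$, governs the opposite direction --- an earlier-committed \CST{} blocking a later-committed \SST{} --- so it does no work in your final case; the ordering there follows from the total commit order of the DAG. With \ref{enum:leader_commit} substituted into your case analysis, your argument aligns with the paper's.
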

\begin{proof}
    Consider a contradiction scenario in which replica $R_1$ executes $T_1$ prior to $T_2$, 
    while replica $R_2$ executes $T_2$ before $T_1$. 
    We also suppose leader $X$ commits $T_1$ and leader $Y$ commits $T_2$. 
    Then it can be shown that $R_1$ is the proposer of $T_1$ 
    and $T_1$ is not converted into a cross-shard transaction.
    Since $R_2$ executes $T_2$ ahead of $T_1$, 
    the round of leader $Y$ must be less or equal to the round of leader $X$ ($ Y\leq X$).
    However, if the round $r$ to $T_1$ occurs before $Y$ ($r < Y$), 
    then $T_2$ cannot be committed (\ref{enum:leader_commit}). 
    Conversely, if $r \geq Y$, then according to \ref{enum:ss_cs_2}, $T_1$ should be converted into a $\CST{}$
    if $T_2$ remains uncommitted.
    Thus, it is impossible for $R_1$ to execute $T_1$ before $T_2$ while $R_2$ executes $T_2$ before $T_1$.
\end{proof}
\begin{theorem}
    When two conflicting transactions, $T_1$ and $T_2$, 
    are executed by honest replicas in the order of $T_1 < T_2$ 
    across two different DAGs, it is ensured that all other honest replicas will 
    execute $T_1$ and $T_2$ in the same order within a single DAG.
\end{theorem}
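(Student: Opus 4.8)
The plan is to reduce this cross-DAG claim to the within-DAG guarantees already established in the two preceding theorems, by showing that the reconfiguration boundary acts as a deterministic, globally-agreed cut that every honest replica respects identically. First I would observe that since $T_1$ and $T_2$ are distinct conflicting transactions finalized in two different DAGs with $T_1 < T_2$, those two DAGs are necessarily totally ordered in time: call the DAG in which $T_1$ is finalized $D_a$ and the one in which $T_2$ is finalized $D_b$. I would argue $T_1$ cannot belong to the later DAG, because a new DAG only begins executing after the previous one has been finalized; placing $T_1$ in $D_b$ would force $T_2 < T_1$ and contradict the hypothesis. Hence $D_a$ precedes $D_b$.

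Next I would establish that the transition from $D_a$ to $D_b$ is uniform across all honest replicas. By the non-blocking reconfiguration mechanism (\Cref{ss:non-blocking}), the ending round of $D_a$ is defined as the round of the first committed block whose causal history contains $2f+1$ Shift blocks, and every replica restarts $D_b$ from that same ending round. Invoking the Consistency and Completeness properties of the DAG protocol (\Cref{ss:dag_propoties}), every honest replica commits the same leader block and therefore the same causal history; consequently all honest replicas compute an identical ending round. This yields a single agreed cut separating the finalized transactions of $D_a$ from those of $D_b$.

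With the cut in place, I would apply the within-DAG safety results twice. Within $D_a$, Theorems~1 and 2 guarantee that all honest replicas execute $T_1$ in a consistent position; within $D_b$, the same theorems guarantee a consistent position for $T_2$. Because each honest replica applies every transaction finalized in $D_a$ \emph{before} transitioning to and executing any transaction of $D_b$, the relative order $T_1 < T_2$ is preserved at every honest replica, giving the desired global consistency.

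The main obstacle is the boundary handling of uncommitted transactions. Since a leader is finalized only after two subsequent rounds, transactions left uncommitted in the last rounds of $D_a$ are discarded and resubmitted into $D_b$ (\Cref{ss:non-blocking}). I would need to confirm that this discard-and-resubmit step cannot let a conflicting transaction straddle the cut ambiguously: any transaction genuinely finalized in $D_a$ is applied before the cut on \emph{every} honest replica, whereas any discarded transaction is re-ordered afresh inside $D_b$ and is therefore governed entirely by the within-$D_b$ argument. The crux is verifying that no honest replica can observe a finalized $D_a$-transaction after it has begun executing $D_b$ — that the agreed ending round is a genuine barrier — and this rests squarely on the uniformity of the ending round established in the second step.
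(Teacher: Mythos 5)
Your proposal is correct and follows essentially the same route as the paper's own proof: both arguments rest on the uniform ending round guaranteed by the non-blocking reconfiguration mechanism together with the consistency and completeness properties of the DAG, which force every honest replica to place $T_1$ and $T_2$ under the same DAG assignment, after which the within-DAG safety theorems settle the order. Your write-up simply makes explicit several steps the paper leaves implicit (the temporal ordering of the two DAGs, the double application of the earlier theorems, and the discard-and-resubmit handling of uncommitted boundary transactions), which is a useful elaboration rather than a different argument.
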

\begin{proof}
    \Cref{ss:non-blocking} illustrates that all honest replicas will transition to the new DAG starting 
    from the same ending round. Refer to the consistency and completeness property of the DAG (\Cref{ss:dag_propoties}),  all honest replicas will execute transactions $T_1$ and $T_2$ within the same DAG.
\end{proof}
Consequently, we can draw the conclusion that: If two transactions are isolated, they can be executed in any order.
Otherwise, all the honest replicas will execute them in the same order. Thus, \sysname{} holds its safety guarantees.
\para{Proof of Liveness}
In an environment where all replicas operate effectively, 
they will propose blocks within the same DAG. 
Each block suggested by the respective shard proposers will ultimately be committed. 
When a malicious replica is identified, the honest replicas will respond by proposing a Shift block.
If fewer than $2f + 1$ Shift blocks are proposed, 
the DAG will remain unchanged, 
and all replicas will continue functioning within the current DAG while proposing new blocks. 
Conversely, if there are $2f + 1$ Shift blocks, all honest replicas will transition to the new DAG 
within the same round, as detailed in \Cref{ss:dag_propoties}.
After a minimum of $2f + 1$ honest replicas successfully relocate to the new DAG, 
they will be empowered to propose new blocks. Provided that all replicas maintain proper behavior, 
they will consistently propose blocks within the same DAG, 
ensuring that each shard proposer's proposed blocks are duly committed.
\vspace{-2mm}
\section{Correctness of serializability on $CC$} 
\label{sec:cc_correctness}
In this section, we will present the proofs that establish the correctness of $CC$. 
We will start by defining key concepts related to serializability, 
followed by a detailed analysis of our findings regarding correctness.

We consider a set of transactions $T = \{ T_i \}$ (where \( 1 \leq i \leq n \)) 
alongside a predefined commit order $CO$. 
$CC$ generates a sequential order denoted as $SO = [T_1, T_2, \dots, T_n]$, 
producing an outcome represented as $OUT = [OUT_1, OUT_2, \ldots, OUT_n]$ that each transaction reads and writes some values on some keys. 
Let $SE$ denote one of the possible sequential orders of $SO$, with $OUT'$ reflecting its corresponding outcomes. 
$CC$ is deemed serializable if the condition $OUT = OUT'$ holds true.
\begin{definition}[Read-Complete]
    If $T_i$ reads a value updated by $T_j$ in $SO$, $T_i$ will also read the value updated by $T_j$ in $SE$.
    If transaction $T_i$ reads a value updated by a transaction $T_j$ in $SO$, 
    then $T_i$ will also read the value updated by $T_j$ in $SE$.
\end{definition}

\begin{definition}[Write-Complete]
    If transactions $T_i$ and $T_j$ both write new values to key $K$, 
    but $T_i$ commits before $T_j$, which generates an order $T_i < T_j$ in $SO$, 
    then $T_i$ will also write the values to $K$ before Tj when in $SE$.
\end{definition}

\begin{theorem}
 $CC$ is considered both Read-Complete and Write-Complete when the dependency graph $G$ is always valid.
\end{theorem}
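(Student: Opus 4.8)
The plan is to prove the two completeness properties separately, after first pinning down the single fact that ties them to the validity hypothesis: the commit-induced order $SO = [T_1,\dots,T_n]$ is itself a topological order of $G$. This holds because $CC$ commits a node only once all of its incoming dependencies have been committed (the \texttt{commit\_list} logic in the \textsf{Commit} routine), so every edge $u\to_k v$ of $G$ is respected by $SO$. Consequently $SO$ is one of the admissible serializations $SE$, and the hypothesis that $G$ is valid --- i.e.\ every topological order of $G$ produces the same $OUT$ --- can be leveraged in both directions: structural edges constrain every $SE$, and conversely the absence of a constraining path would force two topological orders whose outcomes disagree.

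For Read-Complete, I would start from the assumption that $T_i$ reads the value written by $T_j$ on key $K$ in $SO$. By the construction in \Cref{ss:add_op} (\textsf{AddNewReadRecord}), this read is recorded by an edge $T_j\to_K T_i$, and the routine additionally inserts, for every write node $w$ on $K$ lacking a path to $T_j$, an edge yielding $w\to\cdots\to_K T_j$. Thus in any topological order $SE$ every writer of $K$ precedes $T_j$, and $T_j$ precedes $T_i$. It remains to rule out a writer of $K$ falling strictly between $T_j$ and $T_i$: any writer $w'$ created after $T_i$'s read attaches from the reader node $T_i$ (which, being a read node on $K$ with no outgoing $K$-edge at that moment, is exactly the kind of node \textsf{AddNewWriteRecord} links from), producing $T_i\to_K w'$ and hence $T_i < w'$ in every $SE$. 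Therefore $T_j$ is the unique last writer of $K$ preceding $T_i$ in every $SE$, so $T_i$ reads $T_j$'s value there as well, which is precisely Read-Complete.

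For Write-Complete, suppose $T_i$ and $T_j$ both write $K$ and commit in the order $T_i<T_j$ in $SO$. I would argue that $G$ must contain a directed path $T_i\to\cdots\to T_j$. If instead a path ran the other way, $T_j\to\cdots\to T_i$, then the topological order $SO$ would place $T_j$ before $T_i$, contradicting $T_i<T_j$. If no path existed in either direction, then $T_i$ and $T_j$ are incomparable in $G$, so both the ordering with $T_i$ before $T_j$ and the ordering with $T_j$ before $T_i$ extend to topological orders; since these disagree on which write to $K$ survives, they produce different outcomes, contradicting the validity of $G$. Hence the only possibility is a path $T_i\to\cdots\to T_j$, which every topological order $SE$ respects, giving $T_i<T_j$ in $SE$ and establishing Write-Complete.

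The main obstacle I anticipate is the incomparability case of Write-Complete: turning ``no connecting path'' into a genuine outcome discrepancy requires being precise about what $OUT$ records. If $OUT$ captures only each transaction's own read/write values, two bare writes in either order yield identical $OUT$, and the needed discrepancy must instead be supplied by the witness that makes the surviving value observable. I would therefore phrase the contradiction in terms of the latest-writer value that validity must render order-independent, invoking the root node (a write node, per \Cref{ss:def_g}) as the canonical observer of $K$'s final state, so that differing last-writers between the two extensions constitute a concrete violation of $OUT = OUT'$.
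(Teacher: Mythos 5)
Your Read-Complete argument is essentially the paper's: both rest on the invariant maintained by \textsf{AddNewReadRecord}/\textsf{AddNewWriteRecord} that every write node on $K$ has either a path to the writer being read from or a path from the reader, so no writer of $K$ can fall strictly between $T_j$ and $T_i$ in any topological extension. That half is fine.

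Your Write-Complete argument, however, has a genuine gap, and it is precisely the one you flag at the end --- your patch does not close it. In the incomparability case you need ``no path between $T_i$ and $T_j$'' to contradict validity, but with $OUT$ as the paper defines it (the per-transaction read and write values of each $T_i$), two blind writes to $K$ executed in either order produce \emph{identical} $OUT$: each transaction's own reads and writes are unaffected by which write survives. So the two topological extensions do not disagree on $OUT$, and validity of $G$ is not violated; your contradiction never materializes. The root-node fix does not rescue this: per \Cref{ss:def_g} the root is a \emph{write} node modeling the initial state --- it has only outgoing edges to nodes lacking incoming edges on $K$ and performs no reads --- so it cannot serve as an observer of $K$'s \emph{final} value. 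There is no terminal reader in the model, hence no node whose recorded outcome witnesses the last-writer discrepancy. In short, validity alone is too weak to force the path you need.

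The paper closes this differently, by appealing to the protocol mechanism rather than to validity: as described in the Execution Phase of $CC$ (\Cref{ss:execution_engine}), a dependency is \emph{materialized} between two conflicting writers once both have committed, ordered by their commit times. So if $T_i$ commits before $T_j$, the edge (path) $T_i \to T_j$ exists in $G$ by construction, and since $SO$ is the commit order embedded in $G$ --- the observation you correctly establish at the outset via the \texttt{commit\_list} logic --- every serialization $SE$ of the valid graph respects it, giving Write-Complete directly. Replacing your incomparability contradiction with this constructive fact (the edge exists because $CC$ inserts it at commit time, not because validity forbids its absence) repairs the proof and aligns it with the paper's, whose own one-line Write-Complete argument implicitly relies on exactly this mechanism.
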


\begin{proof}
    Firstly, if $G$ is valid, we know that when a read node $R_v^{k}$ (representing a node u that reads on key K) 
    retrieves a value from a write node $W_u^{k}$ on key $K$, all the write nodes that are updating values on $K$ 
    either have a direct path to $u$ or a path from $v$. This guarantees the correctness of read-after-write operations.
    Consequently, if transaction $T_i$ reads values updated by transaction $T_j$ on key $K$, 
    where $T_j$ always exists (with the root being a write node), 
    it follows that no other transactions updating values on $K$ will be located between $T_i$ and $T_j$. 
    Therefore, for any execution order $SO$ generated by $G$, where $T_i$ reads the value updated by $T_j$, 
    it can be inferred that $T_j$ will also read the same value from $T_i$ in $SE$. 
    This demonstrates that $CC$ is Read-Complete.
    Secondly, since $SO$ is an execution order in $CC$, if $T_i$ commits before $T_j$, then $T_i$ will precede $T_j$ in $SO$. 
    Consequently, $T_j$ will update the values after $T_i$ in $SE$ as well. 
    Therefore, we can conclude that $CC$ is Write-Complete.
\end{proof}

\begin{theorem}
    $CC$ is serializable if $CC$ is both Read-Complete and Write-Complete.
\end{theorem}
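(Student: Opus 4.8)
The plan is to show that every transaction reads and writes exactly the same values under the order $SO$ produced by $CC$ and under any serial order $SE$ admitted by the dependency graph $G$, from which $OUT = OUT'$ follows immediately. The starting observation is that, by the data model (\Cref{ss:system_model}), each contract is deterministic and idempotent, so the outcome $OUT_i$ of $T_i$ — the values it reads together with the values it writes — is a deterministic function of the values $T_i$ reads during its execution. Hence it suffices to prove that each $T_i$ observes the same read values in $SO$ and in $SE$; determinism then propagates this equality to the written values as well.

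First I would argue read equivalence by strong induction on the position of a transaction in the serial order $SE$. For the base case, a transaction all of whose reads resolve to the initial database (equivalently, to the root node $R$ of $G$) clearly reads identical values in both orders. For the inductive step, fix $T_i$ and one of its reads on a key $K$. By Read-Complete, $T_i$ reads the value written by the same transaction $T_j$ (or reads from the root) in both $SO$ and $SE$; since a read can only resolve to a writer that precedes it, $T_j$ occurs strictly earlier in $SE$, so the induction hypothesis applies and $T_j$ writes the same value to $K$ in both orders. Therefore $T_i$ reads the same value on $K$.

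The step I expect to be the main obstacle is ruling out an interposed writer: I must show that ``reads the value updated by $T_j$'' is genuinely realized in $SE$, i.e. that no other write to $K$ falls between $T_j$ and $T_i$ in $SE$ and silently overwrites the value. This is exactly where Write-Complete is needed. In $SO$, $T_j$ is the last writer of $K$ preceding $T_i$, and Write-Complete preserves the relative order of all writers of $K$, so in $SE$ every writer of $K$ ordered after $T_j$ in $SO$ remains after $T_j$, keeping $T_j$ the effective writer seen by $T_i$. Applying the same argument to the final writer of each key shows the terminal database state coincides too, which closes the gap for blind (read-free) writes whose effect is visible only through later reads.

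Finally I would assemble the pieces: by induction every transaction reads identical values under $SO$ and $SE$, and by determinism it therefore writes identical values, giving $OUT_i = OUT'_i$ for all $i$ and hence $OUT = OUT'$. Since $SE$ was an arbitrary serial order admitted by $G$, this establishes that $CC$ is serializable. The one subtlety to verify carefully is that the induction is well-founded — that the read-from and write-ordering relations used never force a transaction to depend on itself — which follows from the acyclicity of the valid graph $G$ (\Cref{ss:def_g}), guaranteeing that a topological order, and thus a legitimate induction on $SE$, exists.
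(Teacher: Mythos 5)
Your proof is correct, and it reaches the same conclusion as the paper's, but it is structured quite differently: the paper's proof is a two-sentence direct appeal to the definitions (Read-Complete ``every transaction retrieves the same values regardless of the order,'' Write-Complete ``updates maintain a consistent order,'' hence $OUT = OUT'$), whereas you decompose the claim into a strong induction along $SE$ combined with determinism of the contract code. The distinction matters, because Read-Complete as literally defined only preserves the \emph{reads-from relation} (``$T_i$ reads the value updated by $T_j$''), not value equality: if $T_j$'s own reads differed between $SO$ and $SE$, it could write a different value to $K$, and $T_i$ would read from the same writer yet observe a different value. The paper's proof silently identifies ``reads from $T_j$'' with ``reads the same value''; your induction hypothesis plus determinism is exactly what licenses that identification, and your use of acyclicity of the valid graph $G$ to ground the induction is the right well-foundedness check. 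Your interposed-writer argument via Write-Complete likewise makes explicit what the paper only establishes one theorem earlier (in showing that a valid $G$ implies no writer of $K$ sits between $T_j$ and $T_i$), and your remark about final-state agreement covers blind writes, which neither definition addresses directly. The one soft spot is your citation of the data model for determinism: the paper assumes idempotence and Turing-completeness explicitly, but determinism is only implicit (validators re-execute and expect matching results), so it would be cleaner to flag it as a standing assumption rather than attribute it to \Cref{ss:system_model}. In short, your route buys rigor that the paper's sketch lacks, at the cost of length; the paper's version buys brevity by treating the definitions as already value-aware.
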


\begin{proof}
    Since $CC$ is Read-Complete, every transaction $T_i$ that reads the values of certain keys will retrieve the same values 
    regardless of the order generated by $G$. 
    Additionally, since $CC$ is Write-Complete, every transaction $T_i$ that updates the values of certain keys will maintain a consistent order 
    among all the orders generated by $G$. 
    Therefore, the transactions will yield the same outcomes in any generated orders $SO$ and $SE$ ($OUT = OUT'$).
\end{proof}

\vspace{-2mm}
\section{Concurrency Executor Evaluation}
\label{s:eval}
This section evaluates \sysname{} by assessing its performance on the $CE$
and the \sysname{} protocol. 
We implement all the baseline comparisons using Apache ResilientDB (Incubating)~\cite{apache-resdb, rcc}. 
Apache ResilientDB is an open-source incubating blockchain project 
that supports various consensus protocols. 
It provides a fair comparison of each protocol by offering a high-performance framework. 
Researchers can focus solely on their protocols without considering system structures such as the network and thread models.

We will begin by comparing $CE$ with two baseline protocols:
$OCC$~\cite{kung1981optimistic} and \TPL{}~\cite{soisalon1995partial}. 
Additionally, we will analyze the performance of \sysname{}, 
which is built on Tusk~\cite{narwhat-tusk}, 
and we will also use Tusk as a baseline for our comparisons. 
For our input workload, we will utilize SmallBank~\cite{smallbank, alomari2008cost}, 
a benchmarking suite that simulates common asset transfer transactions. 
This suite is also used to evaluate variant block systems
~\cite{thakkar2020scaling,li2023auto, lai2023private, zhang2023cchain, ruan2020transactional, peng2022neuchain, gorenflo2020fastfabric}.

\vspace{-2mm}
\subsection{Baseline Protocols}
We implement \OCC{}~\cite{kung1981optimistic} and \TPL{}~\cite{yu2014staring, yu2016tictoc} to compare the performance against our concurrent executor.
We set up our experiments on AWS c5.9xlarge consisting of 36 vCPU, 72GB of DDR3 memory.
We use LevelDB as the storage to save the balance of each account.

\para{\OCC{}}
Each executor is responsible for locally executing transactions.
When an operation within a transaction $T$ requires reading the value of a key $K$
that the executor has not previously accessed during the execution,
the executor will retrieve the value from the storage.
Each value also contains a version to indicate the time the value was obtained.
Any write operation will update the values locally.
Upon completion of $T$, all the updated values will be forwarded to a central verifier.
The verifier will cross-check the value versions by comparing them with the current versions in the storage.
If there is a mismatch, the commit will be rejected, necessitating the re-execution of $T$.

\para{\TPL{}}
Each executor performs transactions by directly accessing the storage through a central controller.
When an operation within a transaction $T$ requires the read or update of a key $K$,
the controller will lock $K$ to prevent conflicts.
If an operation seeks to access $K$ but discovers that another executor has locked it,
the executor will release all locks and re-execute $T$.
Upon the completion of $T$, all the results will be transmitted to storage,
and all locks will be released.

\vspace{-2mm}
\subsection{Experiment Setup With Smallbank}
SmallBank~\cite{smallbank} is a transactional system that comprises six distinct transaction types,
five of which are designed to update account balances,
while the remaining transaction is a read-only query that retrieves both checking
and saving the account details of a user.
Our focus is on two types of transactions: SendPayment and GetBalance,
which are used to transfer funds between two accounts and retrieve account balances, respectively.
Our objective is to evaluate the performance under varying read-write balance workloads.
During a SendPayment transaction, the account balances are updated by reading the current balance and then writing the new values back.
We created 10,000 accounts and conducted each experiment 50 times to obtain the average outputs.

We evaluated the impact of parallel execution.
We measured the performance by uniformly selecting GetBalance with a probability
of $P_r$ while SendPayment with $1 - P_r$.
We follow a Zipfian distribution to select accounts as transaction parameters and set the Zipfian parameter $\theta{}$.
The value of $\theta{}$ determines the level of account contention,
with higher values leading to higher contention.
We focus only on data workloads with high contention by setting $\theta{} = 0.85$.

\begin{figure*}[t]
    \centering
    \scalebox{0.82}{\ref{workermain}}\\[3pt]
    \begin{tabular}{ccc}
        \EvalEEWorkerTPS &
        \EvalEEWorkerLat &
        \EvalEEWorkerRetry\\
        \multicolumn{3}{c}{ a) The read-write balanced workflow ($P_r=0.5$).}\\
        \EvalEEWorkerUpdateTPS &
        \EvalEEWorkerUpdateLat &
        \EvalEEWorkerUpdateRetry \\
        \multicolumn{3}{c}{ b) The update only workload ($P_r=0$).}\\
    \end{tabular}
    \vspace{-2mm}
    \caption{Evaluation of $CE$ on different numbers of executors.} 
    \vspace{-4mm}
    \label{fig:worker-eval}
\end{figure*}

\vspace{-2mm}
\subsection{Impact from Concurrent Executor}
We first evaluate the impact of increasing the number of executors to execute the transactions,
then measure the aborts produced by each protocol.
We ran two batch sizes $b300$ and $b500$ for each protocol:
\sysname{}-b300, \sysname{}-b500, \OCC-b300, \OCC-b500,
\TPL{}-b300, and \TPL{}-b500.
We set $P_r = 0.5$ to measure a read-write balanced workflow and $P_r = 0$ on an update-only workflow.

\para{Number of Executors}
In the read-write balanced workflow,
the results depicted in \Cref{fig:worker-eval} (a) show that 
\TPL{} protocols with different batch sizes
all experience a performance drop when increasing the number of executors beyond $8$.
However, \sysname and \OCC{} protocols with all the batch sizes
obtain their highest throughputs on $12$ executors and maintain stable throughput.
\sysname{}-b500 obtained $43K$ TPS while \OCC{}-b500 achieved $35K$ TPS.

In the update-only workflow,
the results shown in \Cref{fig:worker-eval} (b) indicate that
\OCC{} and \TPL{} stopped increasing earlier in $4$ executors (both around $22K$ TPS)
while \sysname provides a peek throughput ($28K$ TPS) in $12$ executors.

These experiments demonstrate that
all the protocols do not obtain significant benefits for a large number of executors
in a high-competition workflow.
However, \sysname can still achieve more parallelism with more executors.

\para{Evaluation of Abort Rates}
As we increased the number of executors,
we also measured the average number of re-executions for the transactions.
The results in \Cref{fig:worker-eval} 
indicate that when the number of executors goes beyond $8$,
all \TPL{} protocols experience a significant increase in the rate of abortions,
leading to a drop in throughput from $24k$ to $18k$ in the read-write balanced workflow.
While \OCC{} protocols provide a lower rate within the read-write balanced workflow.
However, \sysname achieves the lowest abortions,
with \sysname-b500 reducing $50\%$ of the abortions from \OCC{}-b500
and $90\%$ from \TPL{}-b500 in all the experiments.

\para{Evaluation of $\theta{}$}
We will now provide an evaluation with various $\theta$ values, 
using a read-write balance workload of $P_r = 0.5$. 
In particular, we examine high contention workloads where $0.75 \leq \theta \leq 0.9$, 
as this range represents the primary focus of our study. 
The results illustrated in \Cref{fig:pr_eval} (a and b) 
demonstrate that at $\theta = 0.75$, 
both \OCC{} and \sysname{} show comparable performance. 
However, as $\theta$ increases to 0.9, \OCC{} experiences a significant decline in performance, 
while \sysname{} continues to achieve higher throughput levels. 
In contrast, the \TPL{} approaches steady throughput, attributable to its locking strategy.

\begin{figure}[t]
    \centering
    \scalebox{0.82}{\ref{alphamain}}\\[3pt]
    \begin{tabular}{cc}
        \EvalAlphaTPS &
        \EvalAlphaLat \\
        \EvalWorkloadTPS &
        \EvalWorkloadLat
    \end{tabular}
    \vspace{-4mm}
    \caption{Throughput and average latency within varying $\theta{}$ settings with $P_r=0.5$ (a and b)
    and varying $P_r$ settings with $\theta{}=0.85$ (c and d).}
    \vspace{-4mm}
    \label{fig:pr_eval}
\end{figure}

\para{Evaluation of $P_r$}
In this evaluation, we will look at how different values of $P_r$ affect the read and write ratio 
in the workload, with $\theta{} = 0.85$. 
The results in \Cref{fig:pr_eval}(c and d) show that when $P_r = 1$ (all read), 
all protocols behave similarly. 
However, the \OCC{} protocol performs slightly better because it allows non-blocking local executions.
When conflicts happen, the \TPL{} shows a sharp decline in performance. 
In contrast, both \sysname{} and \OCC{} also perform worse when we decrease $P_r$, 
which leads to more conflicts. 
On the other hand, as we increase the value of $P_r$, 
\sysname{} achieves better throughput and lower latency than \OCC{}, 
even when all operations are write-only.
All protocols show similar latency at $P_r = 1$, but as $P_r$ increases, 
the latency for \TPL{} rises, while \sysname{} continues to perform better than OCC.



\vspace{-2mm}
\section{System Evaluation}
We conducted evaluations to determine the impact of \sysname built on Tusk.
In our evaluation, we compared the performance of \sysname with Tusk,
which executes transactions in order after reaching a total order after DAG protocols.
We evaluated the impact of different components of the protocol by comparing the results between three systems:
\begin{itemize}
    \item \sysname{}: Leverage $CE$ + parallel verification.
    \item \sysname{}-OCC: Combine $OCC$ + parallel verification.
    \item Tusk: Utilize the $OE$ model.
\end{itemize}
As the serialized verification will execute the transactions in order to verify the results,
we will expect that any DAG-based protocols with serialized verification will provide the same behavior with Tusk.
We also leveraged SmallBank as the input workload.

Each replica was configured with $CE$ comprising $16$ executors to process transaction batches of $500$, alongside $16$ validators to verify blocks post-consensus. We scaled the system from $8$ to $64$ replicas. By default, $K^{'}$ was set to a sufficiently large value to disable shard rotation. In the final phase of our evaluation, we examined the impact of varying $K^{'}$ values, which govern the frequency of shard reconfiguration (\Cref{ss:rotation}).

\para{SmallBank}
Throughout the system evaluation, we focus on the SmallBank workload with a read-write balanced scenario ($P_r=0.5$), 
where half of the transactions are read-only.

In the smallbank workload, Transaction addresses were selected from a pool of $1000$ users with a skew parameter $\theta{} = 0.85$ to simulate a high-contention workload.

The LAN results, as shown in \Cref{fig:framework_smallbank}, demonstrate that \sysname{} significantly outperformed Tusk’s sequential execution model, achieving a $50x$ speedup. Specifically, \sysname{} reached $500K$ TPS with $64$ replicas, compared to Tusk’s $11K$ TPS. This improvement highlights the benefits of executing transactions in parallel.

We also compared Thunderbolt with \sysname{}-OCC, which replaces the $CE$ with $OCC$. While \sysname{}-OCC matched \sysname{}’s throughput at $8$ replicas, it lagged behind at scale, achieving only $400K$ TPS with $64$ replicas. Furthermore, \sysname{} maintained a low transaction latency of $5$ seconds, whereas Tusk’s latency soared to $100$ seconds under the same conditions.

The WAN results demonstrate similar behavior but with higher latency. 
The latency gap between \sysname{} and Tusk become smaller because the WAN latency begins to dominate cost, becoming bottleneck.

\begin{figure}[t]
    \centering
    \scalebox{0.82}{\ref{frameworkmain}}\\[3pt]
    \begin{tabular}{cc}
        \EvalFrameworkTPS &
        \EvalFrameworkLat \\
        \EvalWANTPS & 
        \EvalWANLat
    \end{tabular}
    \vspace{-4mm}
    \caption{Throughput and average latency within different replicas within $P_r=0.5$ in LAN and WAN.}
    \vspace{-4mm}
    \label{fig:framework_smallbank}
\end{figure}

\para{Cross-shard Transactions}
Next, we evaluated the impact of $\CST{}s$ using $16$ replicas. We randomly assigned a percentage $P\%$ ($0 < P \leq 100$) of transactions to be processed by two shards. Additionally, we assessed the benefits of parallel execution by comparing \sysname{}-OCC.

As shown in \Cref{fig:framework_cross}, the performance of both \sysname{} and \sysname{}-OCC declined as the percentage $P$ increased. In scenarios with only single-shard transactions ($P = 0$), both systems achieved $100K$ TPS. However, when $P$ increased to $8\%$, \sysname{}-OCC’s throughput dropped to $16K$ TPS, while \sysname{} maintained a significantly higher throughput of $64K$ TPS. \sysname{}-OCC’s performance aligned closely with Tusk, achieving approximately $10K$ TPS. In contrast, \sysname{} delivered $19K$ TPS even when all transactions were cross-shard, demonstrating the advantages of its parallel execution model and non-deterministic ordering on $CE$.

Latency metrics further highlighted \sysname{}’s superiority. Under high-contention workloads, \sysname{} achieved a transaction latency of $24s$ seconds, while \sysname{}-OCC’s latency was nearly double at $50$ seconds.

\begin{figure}[t]
    \centering
    \scalebox{0.82}{\ref{crossmain}}\\[3pt]
    \begin{tabular}{cc}
        \EvalCrossTPS &
        \EvalCrossLat
    \end{tabular}
    \vspace{-2mm}
    \caption{Throughput and average latency within different ratios of cross-shard transactions within 16 replicas.}
    \vspace{-4mm}
    \label{fig:framework_cross}
\end{figure}

\begin{figure}[t]
    \centering
    \begin{tabular}{cc}
        \EvalRotateTPS &
        \EvalRotateLat
    \end{tabular}
    \vspace{-2mm}
    \caption{Throughput and average latency within different reconfiguration periods within 8 replicas.  }
    \vspace{-4mm}
    \label{fig:framework_rotate}
\end{figure}

\para{Reconfiguration Periods}
\label{ss:rotation}
Now, we analyze the performance using different reconfiguration periods $K^{'}$
to transition the shard proposers into a new DAG on $8$ replicas.
Figure~\ref{fig:framework_rotate} demonstrates that \sysname exhibited
lower performance with smaller $K^{'}$ values ($80K$ TPS with $K^{'}=10$),
attributed to the costly transition between DAGs.
Conversely, when $K^{'}$ was increased to over $1000$,
\sysname demonstrated significantly improved stability,
achieving a throughput of $180K$ TPS.
Additionally, the average latency decreased from $1.9s$ to $1.7s$ as $K^{'}$ increased from $10$ to $5000$.
\Cref{fig:shard_rec} also shows the average run time of committing proposals per 100 rounds, that is $\frac{1}{100}\sum(T_{commit(i)} - T_{commit(i-1)}$) where $T_{commit(i)}$ is the time of committing round $i$.
We set $K^{'}$ as 300 and it was demonstrated that \sysname will not get stuck during the reconfiguration. The runtime of each round is around $0.07s$ to $0.1s$.

\para{Failures}
Finally, we evaluated the impact of replica failures
within $16$ replicas. 
We forced $f$ replicas ($f=1$ or $f=2$) to stop working during the experiments.
We randomly designated a percentage $P\%$ ($0 < P \leq 100$) 
of the transactions to be processed by two shards. 

\Cref{fig:framework_fail} reveals that \sysname still can provide 
higher throughputs when some shards stop working.
When one replica failed to propose transactions, \sysname\\1 ($f=1$) obtained $78K$ TPS
working on all $\SST{}s$ ($P=0$) and $17K$ TPS on all $\CST{}s$
while when two replicas failed to propose transactions, \sysname\\2 ($f=2$) obtained $66K$ TPS 
on all $\SST{}s$ and $15K$ TPS on all $\CST{}s$.
However, the results show that the latency remains stable even if some replicas fail to process, benefiting from the leader rotation of the DAG protocols.

\begin{figure}[t]
    \centering
    \begin{tabular}{cc}
        \ReconfigLat 
    \end{tabular}
    \vspace{-4mm}
    \caption {Average latency per 100 rounds and reconfigure the shard per 300 rounds. }
    \vspace{-4mm}
    \label{fig:shard_rec}
\end{figure}

%
%
%

\vspace{-2mm}
\section{Related work}
\label{s:related}

\para{Sharding on DAG-based protocol}
Several studies 
~\cite{dang2019towards, wang2019sok,kokoris2018omniledger,zamani2018rapidchain,al2017chainspace, ext_byshard, geobft, ringbft} 
have underscored the necessity of implementing sharding to enhance scalability within blockchain systems. Directed Acyclic Graph (DAG)-based blockchains 
~\cite{churyumov2016byteball, baird2016swirlds, li2020decentralized, yu2020ohie, xu2021occam,
narwhat-tusk, spiegelman2022bullshark,keidar2021all,keidar2022cordial,
stathakopoulou2023bbca,mysticeti,arun2024shoal,
spiegelman2023shoal,shrestha2024sailfish,
malkhi2023bbca, 
raikwar2024sok,
spiegelman2022bullshark, narwhat-tusk,keidar2021all, xiao2022nezha} 
present a promising alternative for improving concurrent transaction processing through the utilization of the DAG data structure. 
However, it is noteworthy that only a limited number of approaches offer effective sharding strategies for DAG protocols 
~\cite{dang2019towards, huang2022brokerchain, cheng2024shardag, amiri2021sharper, qi2022dag} 
These strategies often depend on account-based mechanisms to achieve eventual atomicity, 
employing incentive mechanisms through either the two-phase commit (2PC) protocol or additional coordinating entities.

\begin{figure}[t]
    \centering
    \scalebox{0.82}{\ref{failmain}}\\[3pt]
    \begin{tabular}{cc}
        \EvalFailTPS &
        \EvalFailLat
    \end{tabular}
    \vspace{-2mm}
    \caption{Throughput and average latency within different ratios of cross-shard transactions 
            within 16 replicas when $f$ ($f=1$ or $f=2$) replicas failed.}
    \vspace{-4mm}
    \label{fig:framework_fail}
\end{figure}

\para{Execute-Order-Validate}
Hyperledger Fabric~\cite{hyperledger-fabric} introduces the Execute-Order-Validate (EOV) framework, 
which demonstrates high performance primarily in low-contention workloads. 
To enhance this framework, various techniques have been developed
 ~\cite{thakkar2018performance, gorenflo2020fastfabric, thakkar2021scaling,
sharma2019blurring, gorenflo2020xox, ruan2020transactional},
including methods for reordering transactions within a block and improving execution processes. 
In contrast, \sysname{} addresses scalability by distributing transactions across multiple shards, 
thereby enabling each shard proposer to execute transactions in parallel.

\para{Concurrent Execution}
Deterministic approaches 
~\cite{qadah2018quecc, faleiro2017high, yao2016exploiting} 
have been proposed to improve the efficiency of transaction execution by creating a dependency graph. 
This graph allows for concurrent execution while minimizing conflicts between transactions. 
Furthermore, segmenting transactions 
~\cite{shasha1995transaction,shasha1992simple,chen2021schain,qin2021caracal} 
has been introduced as an effective method for reducing these conflicts.
CHIRON~\cite{neiheiser2024chiron} and BlockSTM ~\cite{gelashvili2023block}
offer nondeterministic execution by extracting dependencies from smart contracts or by determining the execution order based on transaction arrival times. 
Conversely, \sysname{} adopts a different methodology that it does not depend on arrival times or read/write set information for transaction processing. 
Instead, it dynamically assigns execution orders to effectively minimize transaction conflicts.

%

\para{Concurrent Consensus}
A robust and scalable blockchain framework is crucial for the implementation of real-world applications~\cite{bbook, bedrock, chemistry, ccbook}. 
For example, the PoE~\cite{poe} model utilizes speculative execution protocols, 
and protocols, including RCC~\cite{rcc}, FlexiTrust~\cite{sgxsuyash}, and SpotLess~\cite{spotless}, 
incorporate multiple leaders to enhance parallel processing capabilities. 
However, these protocols currently do not support reconfiguration.
\vspace{-2mm}
\section{Conclusions}
We introduce \sysname{}, an innovative sharding system designed to enhance smart contract execution by 
integrating the Order-Execute and Execute-Order-Validate models
to efficiently handle both single-shard transactions ($\SST{}s$) and cross-shard transactions ($\CST{}s$).
We have developed a concurrent executor that significantly improves the performance of $\SST{}s$ 
without requiring prior knowledge of read/write sets. 
Additionally, \sysname{} effectively distributes transactions across multiple shards and utilizes 
Directed Acyclic Graph (DAG)-based protocols to maintain consistency between single-shard and cross-shard transactions.
A key feature of \sysname{} is its ability to utilize the inherent properties of the DAG to facilitate a non-blocking transition 
to a new DAG structure. This allows for the rotation of proposers for each shard, helping to prevent malicious activity from any single proposer. 
Our performance evaluations show that \sysname{} achieves an impressive speedup of up to $50 \times$ compared to the native execution provided by Tusk.

\begin{acks}
  This work is partially funded by NSF Award Number 2245373.
\end{acks}

\vspace{10mm}
\begingroup\noindent\raggedright\textbf{Artifact Availability:}\\
The source code, data, and/or other artifacts have been made available at \url{https://github.com/apache/incubator-resilientdb/tree/Thunderbolt}.
\endgroup

\bibliographystyle{NewEDBTStyle-2025/ACM-Reference-Format}

\bibliography{references}

\clearpage
\begin{appendices}
\end{appendices}

\end{document}